\def\bitset{{\{0,1\}}}    
\def\fbitset{{\{-1,1\}}}    
\newcommand{\E}{{\bf E}}
\def\RR{{\mathbb R}}       
\newtheorem{thm}{Theorem}
\newtheorem{lem}[thm]{Lemma}
\newtheorem{fact}[thm]{Fact}
\newtheorem{prop}[thm]{Proposition}
\newtheorem{claim}[thm]{Claim}
\newtheorem*{thm*}{Theorem}
\newtheorem*{lem*}{Lemma}
\newenvironment{thm_again}[1]{\noindent{\bf Theorem #1.~}}{}
\newcommand{\eps}{\epsilon}
\newcommand{\drv}[2]{\frac{\partial #1}{\partial #2}}
\newcommand{\ignore}[1]{}
\begin{document}

\author{Adi Livnat\footnote{Department of Biological Sciences, Virginia Tech.}
\and Christos Papadimitriou\footnote{Computer Science Division, University of California at Berkeley, CA, 94720 USA, christos@berkeley.edu.}
\and Aviad Rubinstein\footnote{Computer Science Division, University of California at Berkeley, CA, 94720 USA, aviad@eecs.berkeley.edu.}
\and Gregory Valiant\footnote{Computer Science Department, Stanford University, CA, 94305.}
\and Andrew Wan\footnote{The Simons Institute for the Theory of Computing, UC Berkeley.  Parts of this work were completed while at Harvard, and at IIIS, Tsinghua University.}
}

\title{Satisfiability and Evolution}

\date{}

\maketitle

\begin{abstract}
\noindent We show that, if truth assignments on $n$ variables reproduce through recombination so that satisfaction of a particular Boolean function confers a small evolutionary advantage, then a polynomially large population over polynomially many generations (polynomial in $n$ and the inverse of the initial satisfaction probability) will end up almost surely consisting exclusively of satisfying truth assignments.  We argue that this theorem sheds light on the problem of the  evolution of complex adaptations.

\end{abstract}
\newpage
\pagenumbering{arabic}
\section{Introduction}\label{sec:intro}
The TCS community has a long history of applying its perspectives and tools to better understand the processes around us; from learning to multi-agent systems, game theory and mechanism design.  By and large, the efforts to understand these areas from a rigorous and algorithmic perspective have been very successful, leading to both rich theories and practical contributions.  
Evolution is, perhaps, one of the most blatantly algorithmic processes, yet our computational understanding of it is still in its infancy (see \cite{valiant2009evolvability} for a pioneering study), and we currently lack a computational theory explaining its apparent success.
Algorithmically, how plausible are the origins of evolution and the emergence of self-replication? Is evolution surprisingly efficient or surprisingly inefficient? What are the necessary criteria for evolution-like algorithms to yield rich, interesting, and diverse ecosystems? Why is recombination (i.e., sexual reproduction) more successful than asexual reproduction? Given the reshuffling of genomes that occurs through recombination, how can complex traits that depend on many different genes arise and spread in a population?  

In this work, we begin to tackle this last question of why complex traits that may depend on many different genes are able to \textit{efficiently} arise in polynomial populations with recombination.  In the standard view of evolution, a variant of a particular gene is more likely to spread across a population if it makes its own contribution to the overall fitness, independent of the contributions of variants of other genes.
How can complex, multi-gene traits spread in a population?  This may seem to be especially problematic for multi-gene traits whose contribution to fitness does not decompose into small additive components associated with each gene variant
---traits with the property that 
even if one gene variant is different from the one that is needed for the right combination,
there is no benefit, or even a net negative benefit.  Here, we provide one rigorous argument for how such complex traits can efficiently spread throughout a population.   While we consider this question in a model that makes considerable (but justifiable) simplifications, this model makes a theoretically rigorous contribution to the fundamental problem of how evolution can produce 
complexity.

\paragraph{Motivating example: Waddington's experiment.}  In 1953 the great experimentalist Conrad Waddington 
exposed the pupae of a population of \textit{Drosophilia melanogaster} to a heat shock, and noticed that in some of the adults that developed, the appearance of the wings had changed (they lacked a complete posterior crossvein) \cite{wad}. He then maintained a population of flies where only those with altered wings were allowed to reproduce. 
  By repeating the procedure of heat shock and selection over the generations, the percentage of flies with altered wings increased over time to values close to one.  Even more interestingly, beginning at generation fourteen, some flies exhibited the new trait even without having been treated with heat shock.  

At first sight, this surprising phenomenon --- known as genetic assimilation --- recalls Lamarck's now discredited belief that acquired traits can be inherited.  However, Boolean functions provide a purely genetic explanation,
which extends the idea originally offered informally by Stern \cite{Stern:1958kx} (see also \cite{Bateman1959b,Falconer1960}):
Suppose that the phenotype ``altered wings'' is a Boolean function of $n$ genes $x_1,\ldots,x_n$ with two alleles (variants) each, thought as $\{-1,1\}$ variables, and of another $\{-1,1\}$ variable $h$ (standing for ``high temperature''). \footnote{Here we assume for simplicity haploid organisms, that is, each individual has only one copy of each gene.}  
$$x_1+x_2+\cdots +x_n + \frac{(1+h)}{2}\cdot k\geq n,$$
for some integer $k$ (think of $n \approx 10$ , and $k\approx n/3$).  

To see how the percentage of flies with altered wings increases in the population over time, we track the allele frequencies from generation to generation.      
Let $\mu_i^t$ be the average value of $x_i$ in the population at time $t$, and assume the genotype frequencies at time $t$ are distributed according to a distribution $\mu^t$ (the reason for denoting the distribution this way will become clear).    If mating occurs at random with free recombination\footnote{See the next section for any unfamiliar terms and concepts from evolution.} then, in expectation, the average value of each $x_i$ in the next generation is given by:  
\begin{equation}\label{eqn:select}
\mu_i^{t+1} = \frac{\E_{\mu^t}[ f(x)\cdot x_i ] }{\E_{\mu^t}[f]},
\end{equation}
where $f(x)=1$ exactly when a fly having genotype $x$ will develop altered wings (i.e., the above inequality is satisfied) and $f(x)=0$ otherwise.  We then assume that the next generation will be distributed according to a product distribution $\mu^{t+1}$, where each $x_i$ has expectation $\mu_i^{t+1}$.  By approximating the genotype frequencies of the population for each generation in this way, 
it can be shown by calculation that a trait with this genotypic specification (a) is very rare in the population under normal temperature $h=1$; (b) it becomes much more common under high temperature $h=1$;  (c) jumps to just above $50\%$ after the first breeding under $h= 1$; (d) after successive breedings with $h=1$ it is nearly fixed; and (e) if after this $h$ becomes $-1$, the trait is still quite common. 

\paragraph{Note: } Our interpretation of Waddington's experiment is a simplification. First, we consider only the distribution of genotypes in each generation to determine the distribution of the next; instead, we could first take a finite sample according to the present distribution and use that sample to calculate the distribution of the next generation.  Such an approximation can only become exact when the population size is infinite, but it is a standard and useful one in population genetics (and we shall eventually consider finite populations for our main result).  We also assumed that
each individual of the new generation is produced by sampling each gene independently of the other genes, and with probability equal to the frequencies of the two alleles of this gene in the parent population (the adults of the previous generation with altered wings).  This assumption turns out to be justified in the settings that we will consider, as will be discussed in the following section.        


\subsection*{Populations of truth assignments}
This way of looking at Waddington's experiment brings about a very natural question:  Is this amplification of satisfying truth assignments (outcomes (c) and (d) of experiment described above) a property of threshold functions, or is it more general?  Does it hold for all monotone functions, for example?  {\em For all Boolean functions?}  

Consider any satisfiable Boolean function $f:\fbitset^n\to \bitset$ of $n$ binary genes (in the absence of the environmental variable $h$ which was crucial in Waddington's phenomenon).  What if genotypes satisfying this Boolean function had a slight advantage 
under natural selection? 
(In Waddington's experiment, they had an absolute advantage because of the experimental design.)  
For example, imagine that genotypes satisfying $f$ survive to adulthood more than the others, in expectation, by a factor of $(1+\epsilon)$, for some small $\epsilon>0$.  Would this trait (that is, satisfaction of the Boolean function $f$) be eventually fixed in the population?  And, if so, 
could this be
a subtle mechanism for introducing complex adaptation in a population?

To reflect our assumption that satisfaction of $f$ confers only an $\epsilon$-advantage, we may take a function  
$f:\fbitset^n \to \{1,1+\epsilon\}$, where we regard the value $1+\epsilon$ as ``satisfied'', and the value $1$ as ``unsatisfied''.  We track the allele frequencies from generation to generation as in Waddington's experiment: Equation \eqref{eqn:select} gives us the average value $\mu_i^{t+1}$ of each $x_i$ in the next generation, and we describe the next generation by the product distribution $\mu^{t+1}$.    

Suppose that we continue this process, starting from distribution $\mu^0$, and defining \\ $\{\mu^1_i\},\{\mu^2_i\},\ldots, \{\mu^t_i\}\ldots$ as above.    Consider the average fitness of the population at time $t$, defined as $\mu^t(f)=\Pr_{\mu^t}[f(x) = 1+\epsilon]$.  The question is, when does $\mu^t(f)$ approach one?  
Our first result states that, for monotone functions, it does after $O\left(\frac{n}{\epsilon \mu^0(f)}\right)$ steps:

\begin{thm}\label{thm:monotone-main}
If $f$ is monotone, then $\mu^t(f) \geq 1-{n (1+\epsilon)\over \epsilon t \mu^0(f)}.$
\end{thm}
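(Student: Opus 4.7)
My plan is to exploit the multilinearity of $F(\mu) := \E_\mu[f]$ in $\mu \in [-1,1]^n$, which holds because $f$ is a function on $\{-1,1\}^n$. Using $\E_\mu[x_i\, h(x_{-i})] = \mu_i\, \E_\mu[h]$ together with the fact that $F$ is linear in each $\mu_i$, the update \eqref{eqn:select} simplifies to
\[
\mu_i^{t+1} - \mu_i^t \;=\; \frac{1 - (\mu_i^t)^2}{F(\mu^t)} \cdot \frac{\partial F}{\partial \mu_i}(\mu^t).
\]
Since $f$ is monotone, every $\partial_i F$ is non-negative, so each $\mu_i^t$ is non-decreasing in $t$, and because $F$ is itself monotone in each coordinate, $F(\mu^t) = 1 + \epsilon\, \mu^t(f)$ is non-decreasing---giving qualitative convergence.

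The heart of the proof is a quantitative per-step lower bound on $F(\mu^{t+1}) - F(\mu^t)$. The first ingredient is that by multilinearity and monotonicity, for each coordinate $i$,
\[
F(\mu^{t+1}) \;\geq\; F(\mu_i^{t+1},\mu_{-i}^t) \;=\; F(\mu^t) + (\mu_i^{t+1} - \mu_i^t)\, \partial_i F(\mu^t);
\]
averaging these $n$ inequalities and plugging in the update rule yields
\[
F(\mu^{t+1}) - F(\mu^t) \;\geq\; \frac{1}{n\, F(\mu^t)} \sum_{i} (1 - (\mu_i^t)^2)\,(\partial_i F(\mu^t))^2.
\]
The second ingredient is an Efron--Stein/Poincar\'e-type inequality on product measures: writing $f = 1 + \epsilon g$ with $g \in \{0,1\}$ and $G(\mu) = \mu(f)$, one has $\sum_i (1-\mu_i^2)\,\partial_i G \;\geq\; 2\,G(1-G)$. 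Combining this with Cauchy--Schwarz, my target is the bound
\[
F(\mu^{t+1}) - F(\mu^t) \;\geq\; \frac{(F(\mu^t) - 1)\,(1+\epsilon - F(\mu^t))}{n\,(1+\epsilon)}.
\]

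Given this per-step bound the rest is standard. Let $\Delta^t := 1+\epsilon - F(\mu^t) = \epsilon\,(1 - \mu^t(f))$; using $\mu^t(f) \geq \mu^0(f)$ the inequality becomes $\Delta^{t+1} \leq \Delta^t\,(1 - c)$ with $c = \epsilon\, \mu^0(f)/(n(1+\epsilon))$. Iterating gives $\Delta^t \leq \Delta^0\, e^{-c t}$, and the elementary bound $x e^{-x} \leq 1$ converts this exponential into the polynomial decay $\Delta^t \leq n(1+\epsilon)/(t\, \mu^0(f))$, which rearranges to the claimed inequality.

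The main obstacle is nailing down the constant in the per-step lower bound: the naive chain of inequalities (one factor of $n$ lost in passing from the max over coordinates to the average, and another lost in Cauchy--Schwarz when moving from $\partial_i G$ to $(\partial_i G)^2$) yields $n^2$ rather than $n$ in the denominator. Extracting the sharp $n$ will likely require either a sharper Poincar\'e-type inequality specialized to monotone Boolean functions under biased product measures, or a more careful coordinate-by-coordinate path argument from $\mu^t$ to $\mu^{t+1}$ that avoids the lossy max-vs.-average step---this is where monotonicity of $f$ should be used most essentially, beyond the qualitative fact that $\mu_i^{t+1} \geq \mu_i^t$.
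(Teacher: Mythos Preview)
Your setup is correct: the update rule $\mu'_i-\mu_i=\frac{\sigma_i^2}{F(\mu)}\partial_i F(\mu)$, the coordinate-wise monotonicity of the trajectory, and the biased Efron--Stein/Poincar\'e inequality $\sum_i (1-\mu_i^2)\,\partial_i G \ge 2G(1-G)$ for monotone $g\in\{0,1\}$ are exactly the three ingredients the paper uses. You have also correctly diagnosed the gap: your chain of inequalities for $F(\mu^{t+1})-F(\mu^t)$ loses one factor of $n$ in the averaging step and another in Cauchy--Schwarz, and simultaneously squares $G(1-G)$, yielding the per-step bound $\frac{4\epsilon^2 G^2(1-G)^2}{n^2(1+\epsilon)}$ rather than your target $\frac{\epsilon^2 G(1-G)}{n(1+\epsilon)}$.

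The missing idea, however, is not a sharper Poincar\'e inequality or a more careful hybrid path. Both remedies you propose still aim to lower-bound $F(\mu^{t+1})-F(\mu^t)$, and any such bound must pass through $\sum_i \sigma_i^2(\partial_i F)^2$ (the squared gradient), which forces the Cauchy--Schwarz loss when you compare to Poincar\'e's linear expression $\sum_i \sigma_i^2\,\partial_i F$. The resolution is to change the potential function: track $\Phi(\mu)=\sum_i \mu_i$ rather than $F(\mu)$. Then the per-step increment is
\[
\Phi(\mu^{t+1})-\Phi(\mu^t)=\sum_i(\mu'_i-\mu_i)=\frac{1}{F(\mu^t)}\sum_i \sigma_i^2\,\partial_i F(\mu^t)=\frac{\epsilon}{F(\mu^t)}\sum_i(1-\mu_i^2)\,\partial_i G,
\]
which is already the linear quantity appearing in Poincar\'e; applying it directly (no averaging, no Cauchy--Schwarz) gives $\Phi(\mu^{t+1})-\Phi(\mu^t)\ge \frac{2\epsilon\, G^t(1-G^t)}{1+\epsilon}\ge \frac{2\epsilon\,\mu^0(f)\,(1-\mu^t(f))}{1+\epsilon}$. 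Since $\Phi$ is confined to $[-n,n]$, if $\mu^t(f)=1-\delta$ then each of the first $t$ increments is at least $\frac{2\epsilon\,\mu^0(f)\,\delta}{1+\epsilon}$ (using monotonicity of $t'\mapsto\mu^{t'}(f)$), forcing $t\le \frac{n(1+\epsilon)}{\epsilon\,\mu^0(f)\,\delta}$, which rearranges to the claimed bound. The single factor of $n$ enters only through the range of $\Phi$.
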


\paragraph{Note:}  This nontrivial result also serves to illustrate one point:  The work is {\em not} about satisfiability heuristics (monotone functions are not an impressive benchmark in this regard...).  Heuristics are about finding good {\em individuals} in a population.  In contrast, evolution is about creating good {\em populations}.  This is our focus here.


\medskip 
Our ambition is to prove the same result for all Boolean functions.  Immediately we see that this is impossible if we insist on an infinite population: Consider the function $f=x_1\oplus x_2$: starting with the uniform distribution at time $t=0$, the above dynamics would leave the distribution unchanged, for all time, and hence $\mu^t(f)=1/2$ for all $t$.  The parity function is not the only Boolean function with this property: for example the function ``$\sum_{i=1}^n x_i = k$'', if started at $\mu_i = {k\over n}$, will stay at that spot forever, and will always have $\mu^t(f)=O(1/\sqrt{k})$.  However, experimentation shows that these ``spurious fixpoints'' are not absorbing, and evolution pulls the distribution away from them and towards satisfaction.  That is, this disappointing phenomenon is an artifact of the infinite population simplification. 
Indeed, random genetic drift due to sampling effects has been considered to be a significant component of evolution at the molecular level (it is possible for an allele to become fixed in the population even in the absence of selection). Thus, we need to make the model more realistic.

We adopt a model, consisting of the following process:  At each generation $t$ we create a large population of $N$ individuals (we call this the ``sampling'' step) by 
sampling $N$ times from the product distribution $\mu^t$ to obtain $y^{(1)},\dots,y^{(N)}$ ($N$ is assumed to remain constant from one generation to the next, which is a standard assumption in population genetics \cite{Gillespie2004}).
 The empirical allele frequencies of the sample are given by a vector $\nu^t$, where for each $i$ we have: 
$$\nu^t_i = \frac{1}{N}\sum_{j=1}^{N} y_i^{(j)}.$$       
We write $\nu^t \sim B(\mu^t)$ to denote a draw from this distribution and use $\nu^t$ to denote the implied Bernoulli distribution.  

We then enforce the assumed selection advantage of satisfaction to obtain the ``in-expectation'' frequencies of the subsequent generation:  
\begin{equation}\label{eqn:select-noise}
\mu_i^{t+1} = \frac{\E_{\nu^t}[ f(x)\cdot x_i ] }{\E_{\nu^t}[f]}.
\end{equation}
We show that when selection is weak, \textit{any} satisfiable Boolean function will almost surely be always satisfied after polynomially many time steps.    
\begin{thm} \label{thm:main} {\bf (informal statement)}
For any satisfiable Boolean function $f$ of $n$ variables and any sufficiently small $\epsilon>0$, after $T$ generations of $N$ individuals $\mu^T(f)=1$ with probability arbitrarily close to one, where $T$ and $N$ are polynomial functions of $n$, $1\over \epsilon$, and $1\over \mu^0(f)$.
\end{thm}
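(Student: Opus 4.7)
The plan is to reinterpret the update \eqref{eqn:select-noise} as noisy gradient ascent on the potential $\Phi(\mu) := \E_\mu[g]$, where $g(x) := (f(x)-1)/\epsilon \in \{0,1\}$ indicates satisfaction. A Fourier calculation on the covariance under a product distribution yields $\E_\nu[g(x)x_i] - \nu_i\E_\nu[g] = (1-\nu_i^2)\,\partial_i\Phi(\nu)$, and expanding the ratio in \eqref{eqn:select-noise} to first order in $\epsilon$ gives
$$\mu_i^{t+1} \;=\; \nu_i^t \;+\; \epsilon\bigl(1-(\nu_i^t)^2\bigr)\,\partial_i\Phi(\nu^t) \;+\; O(\epsilon^2),$$
while the sampling step $\nu^t\sim B(\mu^t)$ contributes coordinate-independent, mean-zero noise of variance $(1-(\mu_i^t)^2)/N$ per coordinate. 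The chain thus lives in $[-1,1]^n$, whose vertices are absorbing and whose global $\Phi$-maxima are exactly the satisfying vertices ($\Phi=1$).

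The second step is to show that $\Phi(\mu^t)$ is an almost-submartingale. Multilinearity of $\Phi(\mu)=\sum_S \hat g(S)\prod_{i\in S}\mu_i$ in $\mu_i$, together with independence of the sampled coordinates, gives $\E[\Phi(\nu^t)\mid\mu^t]=\Phi(\mu^t)$; combining this with a first-order expansion of $\Phi(\mu^{t+1})$ around $\nu^t$ yields
$$\E[\Phi(\mu^{t+1})\mid\mu^t]-\Phi(\mu^t) \;=\; \epsilon\,\E_{\nu\sim B(\mu^t)}\!\left[\sum_i (1-\nu_i^2)\bigl(\partial_i\Phi(\nu)\bigr)^2\right] - O(\epsilon^2).$$
Since $0\le\Phi\le 1$, summing the non-negative drift over $T$ rounds bounds the total gradient mass by $1/\epsilon$, forcing $\mu^t$ toward critical points of $\Phi$; an auxiliary one-dimensional biased-random-walk analysis of each $\mu_i^t\in[-1,1]$ then shows that $\mu^t$ is absorbed at some vertex within $\mathrm{poly}(n,N,1/\epsilon)$ generations.

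The crux, and the step I expect to be hardest, is to rule out absorption at an unsatisfying vertex $v$ and trapping near interior critical points. My plan is a quantitative escape estimate: fix any satisfying assignment $x^\star$ (extant by hypothesis) and follow a Hamming path from $v$ to $x^\star$. At each intermediate $\mu^t$ close to $v$, the first coordinate $i$ on the path that still has $\mu_i\ne x^\star_i$ satisfies $\partial_i\Phi(\nu) > 0$ with probability bounded below by a polynomial in $\mu^0(f)$ over the sampling of $\nu$, so the selection drift of $\mu_i$ toward $x^\star_i$ dominates the absorbing boundary pull once $N=\mathrm{poly}(n,1/\epsilon,1/\mu^0(f))$. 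Interior saddles are handled analogously: multilinearity and non-constancy of $\Phi$ force its Hessian to be nonzero at any non-maximum critical point, so the $1/\sqrt N$-scale noise produces $\Omega(\epsilon/N)$-scale expected drift in $\Phi$ per generation and escapes the saddle in polynomial time. Iterating this per-window escape bound — each $\mathrm{poly}$-size window moves to a strictly better $\Phi$-plateau with probability $1-1/\mathrm{poly}$ — together with the Doob bound $\E[\Phi^T]\ge\mu^0(f)$, upgrades the almost-sure convergence to $\Pr[\mu^T(f)=1]\ge 1-o(1)$ for suitable polynomial $T$ and $N$; the delicate bookkeeping is to certify that the escape probability remains polynomial uniformly in how close $\mu^t$ approaches $v$.
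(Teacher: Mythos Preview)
Your first two paragraphs are essentially right and match the paper: the update is approximately $\epsilon$-step gradient ascent on $\Phi$, and the one-step expected increase of $\Phi$ is (up to $O(\epsilon^2)$) the weighted gradient mass $\epsilon\,\E_\nu\bigl[\sum_i(1-\nu_i^2)(\partial_i\Phi(\nu))^2\bigr]$. The paper proves exactly this (its Lemma~\ref{lem:density}) via biased Fourier analysis, and also makes your saddle-escape intuition precise by showing (Lemma~\ref{lem:variance-noise}) that the sampling variance of $\tilde f(\nu)$ is at most $\frac{1}{N-1}$ times the \emph{expected} gradient mass $\E_\nu\bigl[\sum_i\hat f(i;\nu)^2\bigr]$ --- so noise can only be large where the subsequent drift is large.

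The genuine gap is your third paragraph: the Hamming-path escape argument for unsatisfying vertices cannot be made to work. Vertices are absorbing in this model (if $\nu_i=\pm 1$ then $\mu'_i=\pm 1$ forever), so the question is whether the process can drift into the basin of an unsatisfying $v$ before a coordinate fixes. Your claim that ``the selection drift of $\mu_i$ toward $x^\star_i$ dominates the absorbing boundary pull'' fails uniformly: take $f=\bigwedge_i x_i$ and $v=(-1,\dots,-1)$. At $\mu=(-1+\delta,\dots,-1+\delta)$ the relevant partial derivative is $\partial_i\Phi(\mu)=\Theta(\delta^{n-1})$, so the selection increment on coordinate $i$ is $O(\delta^{n-1/2})$, while a single sampling step fixes $\mu_i=-1$ with probability $(1-\delta/2)^N$, which is $1-o(1)$ once $\delta\ll 1/N$. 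There is no polynomial lower bound on escape probability that is uniform in the distance to $v$; the ``delicate bookkeeping'' you flag is in fact impossible. Relatedly, the Doob bound $\E[\Phi^T]\ge\mu^0(f)$ gives only $\Pr[\Phi^T=1]\ge\mu^0(f)$, not probability arbitrarily close to~$1$.

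The paper sidesteps escape entirely. It never argues that the process leaves the neighborhood of an unsatisfying vertex; instead it shows the process never enters one. The key extra ingredient you are missing is a \emph{concentration} (not just expectation) bound on the cumulative sampling effect: $S_T=\sum_{t\le T}\bigl(\tilde f(\nu^t)-\tilde f(\mu^t)\bigr)$ is a martingale whose conditional variances sum to at most a constant times the total selection gain, hence to at most $O(\epsilon/N)$. A Bernstein-type inequality for martingales with bounded predictable quadratic variation then gives $|S_T|\le\alpha$ with high probability for $\alpha$ of order $(\epsilon/N)^{1/4}$. Since selection never decreases fitness and $\tilde f(\mu^0)>1+\alpha$, the fitness stays strictly above $1$ throughout; combined with the (correct) argument that $\mu^T$ hits a vertex in polynomial time, the vertex must be satisfying. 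Your submartingale and saddle computations are the right raw material for this, but you need to turn them into a high-probability bound on $S_T$ rather than an escape estimate.
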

The proof of Theorem \ref{thm:main} shows why the population does not become stuck at the previously discussed ``spurious fixpoints;'' 
sampling effects ensure movement over sufficiently many generations, and selection ensures movement is made towards satisfaction.    

\subsection*{Outline of the paper}
In the next section we introduce some basic concepts from population genetics, we define and justify our simplified model, and we present a result due to Nagylaki \cite{Nag} implying that, if selection is weak, then one can assume that the genotype distribution is a product distribution.  In Section \ref{sec:monotone}, we show Theorem \ref{thm:monotone-main} on monotone functions.  Our main result is given in Section \ref{sec:main}, and its proof outlined; the full proof is detailed in the Appendix.  In Section \ref{sec:discussion}, we conclude with a discussion of our result, and a number of open problems.

\section{Evolution background and our model}
The genetic makeup of an organism is its {\em genotype}, which specifies one {\em allele} (gene variant) for each genetic site, or ``locus,'' in the haploid case. We shall be focusing on $n$ specific genes of interest (say, a few dozen out of the many thousands of genes of the species). At each locus, we assume that there are two alleles segregating in the population (hence the relevance of Boolean functions).  Thus, a genotype will be a vector in $\{\pm 1\}^n$. 
We assume the species reproduces sexually (this is crucial, see the discussion in the last section).  In a sexual species reproduction proceeds through {\em recombination,} that is, the formation of a new genotype by choosing alleles from two parental genotypes in the previous generation.  
To produce each generation, the 
individuals
mate at random (we also assume no bipartition into sexes) 
and there is no generation overlap (that is, the new generation is produced en masse just before death of the previous one).  We assume that 
the population size is constant at some large number 
$N$ (expressed as a function of $n$, the number of genes of interest, which is the basic parameter).  Each genotype $g\in\{\pm 1\}^n$ is assumed to have a {\em fitness value} equal to the expected number of offsprings this genotype will produce.  We also assume that the genes {\em recombine freely}, that is, for any two genes $i,j$ of
an offspring,
the probability that the alleles come from the same parent is exactly half (and not larger, as is the case if the two genes are linked).

These assumptions are simplifications of the standard model of population genetics used broadly in the literature, and generally trusted to preserve the essence of selection in sexual populations.  The Boolean assumption is of course meant to bring into play mathematical insights from that field, but we believe that it is not restrictive (for example, allele $-1$ could stand for ``any allele other than allele $1$'').  In this paper we shall make two more assumptions.  The first additional assumption is that the fitness values of our genotypes are either $1$ or $1 + \epsilon$, where $\epsilon>0$ is very small: the organism will reproduce slightly more in expectation if an underlying Boolean function is satisfied.
We discuss this restriction in Section \ref{sec:discussion}.

The final assumption is more problematic in general, but justified in the current context:  We assume that generating an individual 
of the next generation is tantamount to selecting, independently, an allele for each of the $n$ genes, with probability equal to the probability of occurrence of that allele in the parent generation.   That is, we assume that the distribution of the genotypes in a generation is a {\em product distribution}.  This situation is called in the population genetics literature {\em linkage equilibrium}, or {\em the Wright manifold} \cite{Wright:1931uq,Wright:1932kx}.  
In general, genotype frequencies are known to be correlated, and this correlation --- the distance from the product distribution --- is called {\em linkage disequilibrium} \cite{Lewontin:1964ys} and is of importance and interest in the study of evolution.  However, in the absence of selection, a standard argument shows that the distribution of a population quickly reaches linkage equilibrium (arguments exist both for finite and infinite populations).  Our previous assumption places our experiment in a regime known as {\em weak selection}. Weak selection means that the fitness values are in a small interval $[1-\epsilon, 1+\epsilon]$, where $\epsilon$ is called the {\em selection strength}.  An elegant and powerful result due to Thomas Nagylaki \cite{Nag} states that, under weak selection, evolution proceeds to a point very close to linkage equilibrium.  In particular, assume that a population evolves as we described above in a regime of weak selection of strength $\epsilon$, and let $m$ be the total number of  alleles (this is $2n$ in our case; actually, Nagylaki's Theorem also holds under diploid and partial recombination).  By linkage disequilibrium we mean formally  the $L_{\infty}$ distance between the genotype distribution and the product distribution:

\begin{thm} (Nagylaki's Theorem, see \cite{Nag}) \label{thm:nagylaki}  Under weak selection, and after $O(\log m\cdot \log 1/\epsilon)$ generations, linkage disequilibrium is $O(\epsilon)$. 
\end{thm}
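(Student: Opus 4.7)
The plan is to study linkage disequilibrium through the Fourier (Walsh) expansion of the genotype distribution on $\fbitset^n$. For a distribution $p$ with marginals $\mu_i$, the product distribution $\pi$ with the same marginals has $\hat{\pi}(S)=\prod_{i\in S}\mu_i$, so the deviation from the Wright manifold is captured by the disequilibrium coefficients $d(S) := \hat{p}(S)-\hat{\pi}(S)$ for $|S|\geq 2$, and Fourier inversion bounds $\|p-\pi\|_\infty$ by $\max_S |d(S)|$ (up to absolute constants).

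First I would analyze one generation of free recombination in isolation. If an offspring $y$ is produced by sampling each locus independently from one of two uniformly chosen parents, a direct computation on characters gives
\[
 \widehat{p^{\text{new}}}(S) \;=\; \frac{1}{2^{|S|}}\sum_{A\subseteq S}\hat{p}(A)\,\hat{p}(S\setminus A),
\]
because the two parents are independent and the subset of $S$ inherited from parent $1$ is uniform. Writing $\hat{p}(A)=\hat{\pi}(A)+d(A)$ with $d(\emptyset)=d(\{i\})=0$, the terms $A\in\{\emptyset,S\}$ contribute $\hat{\pi}(S)+2^{1-|S|}d(S)$, while the remaining terms produce only combinations of $d(A)$ with $1<|A|<|S|$. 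Hence $d^{\text{new}}(S)=2^{1-|S|}d(S)+(\text{combinations of strictly lower-order }d)$. In particular, level-$2$ disequilibrium is exactly halved each generation; this is the Geiringer picture of geometric convergence to the Wright manifold under pure recombination.

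Next I would bound the perturbation introduced by one generation of weak selection. Multiplying the distribution pointwise by $f(x)/\E_p[f]$ with $f\in[1-\epsilon,1+\epsilon]$ rescales each mass by a factor $1+O(\epsilon)$, and hence shifts every Fourier coefficient by at most $O(\epsilon)$. So selection alone can create at most $O(\epsilon)$ new disequilibrium per coefficient per generation.

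Combining, the level-$2$ disequilibrium obeys $|d^{t+1}(S)|\leq \tfrac12|d^t(S)|+O(\epsilon)$ and is driven to $O(\epsilon)$ in $O(\log 1/\epsilon)$ generations. An induction on $k=|S|$ then handles higher levels: once every $d(A)$ with $|A|<k$ has been driven to $O(\epsilon)$, the cross-terms feeding into the level-$k$ recursion are themselves $O(\epsilon)$ (products of one or two $d$'s with bounded $\hat{\pi}$'s), giving $|d^{t+1}(S)|\leq (1-2^{1-k})|d^t(S)|+O(\epsilon)$. A weighted potential over levels, exploiting the fact that most of the Fourier mass is forced onto low levels by the previous stages, yields the claimed $O(\log m\cdot \log 1/\epsilon)$ total generations to reach $L_\infty$ disequilibrium $O(\epsilon)$. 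The main obstacle is controlling cross-level coupling: recombination is only block-triangular in Fourier levels, and selection mixes everything, so the induction must be set up so that the errors incurred at each level are absorbed into the $O(\epsilon)$ budget without compounding. The weak-selection hypothesis $f\in[1-\epsilon,1+\epsilon]$ is exactly what makes this closure possible: a non-weak selection step could introduce $\Theta(1)$ disequilibrium per generation and overwhelm the contraction from recombination.
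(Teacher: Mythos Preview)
The paper does not prove this theorem. It is stated as a known result of Nagylaki \cite{Nag} and invoked solely to justify the linkage-equilibrium (product-distribution) approximation under weak selection; no argument, sketch, or proof idea for it appears anywhere in the paper. Consequently there is nothing to compare your proposal against.

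On the merits of your sketch: the Fourier/Walsh decomposition and the observation that free recombination acts triangularly on levels are correct and standard (this is essentially the Geiringer--Bennett picture). The step that does not close, however, is the passage from the level-by-level recursion to the stated time bound. Your own recursion gives contraction factor $1-2^{1-k}$ at level $k$, so driving a level-$k$ coefficient from $\Theta(1)$ down to $O(\epsilon)$ already takes $\Theta(2^{k}\log(1/\epsilon))$ generations, and the stationary value of the recursion $|d^{t+1}(S)|\le(1-2^{1-k})|d^t(S)|+O(\epsilon)$ is $O(2^{k}\epsilon)$, not $O(\epsilon)$. With $k$ ranging up to $n$ and $m=2n$, neither the time nor the final disequilibrium matches the claimed $O(\log m\cdot\log(1/\epsilon))$ and $O(\epsilon)$. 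The sentence about a ``weighted potential'' that forces most mass onto low levels is exactly where the real work would have to be done, and nothing in the preceding steps supplies it: recombination does not push mass to lower levels (it only damps each level in place), and weak selection can inject $O(\epsilon)$ mass at every level every generation. Nagylaki's actual argument proceeds quite differently, via a dynamical-systems analysis (singular perturbation of the recombination--selection ODE/difference system and convergence to a slow manifold near the Wright manifold), rather than a Fourier-level induction; the constants hidden in his $O(\cdot)$ depend on the recombination rates and the number of loci in ways your sketch does not track.
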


In our setting $\epsilon$ is minuscule, so Nagylaki's Theorem motivates our assumption that populations are formed ``by independent sampling of the genetic soup.'' 
We strongly believe that our theorem is true for large $\epsilon$ as well, but this remains open, as discussed in the last section.

\section{Weak selection on monotone functions}
\label{sec:monotone}
In this section we give a self-contained proof of Theorem \ref{thm:monotone-main}.  The proof is simple, once a connection is made to discrete Fourier analysis. In what follows, we assume familiarity with Fourier analysis over the Boolean cube for product distributions.  We briefly review some basic facts and describe the notation used in our proofs.    

For $\mu = (\mu_1,\dots,\mu_n) \in [-1,1]^n$ and a function $f:\fbitset^n_\mu \to \RR$, where 
$\fbitset^n_\mu$ denotes the Boolean cube endowed with the product distribution given by $\mu_i = \E[x_i]$, we 
consider the $\mu$-biased Fourier decomposition of $f$. Let $\sigma_i^2 = 1-\mu_i^2$ be the variance of each bit. 
We denote the $\mu$-biased Fourier coefficients by $\hat{f}(S;\mu) = \E_\mu [f \cdot \phi^\mu_S]$, where $\phi^\mu_S = \prod_{i\in S} \frac{x_i -\mu_i}{\sigma_i}$. 
Let $D_i^{(\mu)} f = \frac{\sigma_i}{2} (f_{i=1} - f_{i=-1})$ be the difference operator for Boolean functions over $\fbitset^n_\mu$.   
We have that 
\begin{equation} \label{eqn:diff} D_i^{(\mu)} f = \sum_{S\ni i} \hat{f}(S;\mu) \phi^\mu_{S\setminus \{i\}},
\end{equation}
and in particular, $\E_\mu[ D_i^{(\mu)}f ] = \hat{f}(i;\mu)$, which we will use repeatedly throughout our proofs. 

Our first step will be to observe that the change in allele frequencies from one generation to the next may be expressed in terms of $f$'s linear Fourier coefficients.   
Let $\mu$ be the vector which specifies the allele frequency of the population at time $t$.  Then, letting $\mu'$ be the allele frequency vector at time $t+1$ and using the selection specified by Equation \eqref{eqn:select}, we have that   
\begin{equation}\label{eqn:select2}
\mu'_i - \mu_i = \sigma_i \frac{\hat{f}(i;\mu)}{\E_\mu[f]}.
\end{equation}
This follows immediately from the definitions:      
\begin{align*}
\sigma_i \cdot \hat{f}(i;\mu) & = \sigma_i\cdot \E_\mu [ f\cdot \phi_i^\mu] \\ 
& = \E_\mu[f\cdot x_i ] - \E_\mu[ f]\cdot \mu_i  \\ 
& = \E_\mu[f]\cdot \mu'_i - \E_\mu[f]\cdot \mu_i. 
\end{align*}

Our proof uses the following well-known facts, which are easily derived from the basic notions 
(see Chapter 2.3, 
\cite{o2014analysis} 
).  First, we have that the influences of a monotone function 
are given by its linear coefficients.  
(For a function $f:\fbitset^n_\mu \to \RR$, we denote its influence in direction $i$ by $\sum_{S\ni i} \hat{f}(S;\mu)^2$.) Next, the inequality of Poincar\'{e} lower bounds the total influence of a function by its variance. The versions below have been scaled to our setting and can be obtained by applying the original facts to a Boolean function $g:\fbitset^n\to \fbitset$ and setting $f(x) = 1+\frac{\epsilon}{2} (1+g)$. 
\begin{prop}\label{prop:influence} Let $f:\fbitset_\mu^n \to \{1,1+\epsilon\}$ be monotone. Then for all $i\in [n]:$
\begin{eqnarray*}
\sum_{S \ni i} \hat{f}(S;\mu)^2  =    \frac{\epsilon \sigma_i}{2} \cdot \hat{f}(i;\mu).
\end{eqnarray*}
\end{prop}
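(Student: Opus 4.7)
The plan is to get Proposition \ref{prop:influence} directly from the difference operator formalism already set up in the excerpt, rather than by invoking the unbiased facts from O'Donnell's book and transforming. The right-hand side equals $\frac{\epsilon \sigma_i}{2} \E_\mu[D_i^{(\mu)} f]$ by the identity $\E_\mu[D_i^{(\mu)} f] = \hat{f}(i;\mu)$ already recorded after (\ref{eqn:diff}), while the left-hand side equals $\E_\mu[(D_i^{(\mu)} f)^2]$ by Plancherel applied to the expansion (\ref{eqn:diff}), since the characters $\{\phi^\mu_S\}$ form an orthonormal basis for $L^2(\fbitset^n_\mu)$ and the summands on the right of (\ref{eqn:diff}) are indexed by distinct sets $S \setminus \{i\}$. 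So the task reduces to showing $\E_\mu[(D_i^{(\mu)} f)^2] = \frac{\epsilon \sigma_i}{2} \E_\mu[D_i^{(\mu)} f]$.

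I would obtain this by exploiting that $f$ is two-valued in $\{1,1+\epsilon\}$ together with monotonicity. Restricting the remaining $n-1$ coordinates to any fixed assignment, the difference $f_{i=1}-f_{i=-1}$ a priori lies in $\{-\epsilon,0,\epsilon\}$, but monotonicity (in the natural order $1 < 1+\epsilon$) forces it to lie in $\{0,\epsilon\}$. Consequently $D_i^{(\mu)} f = \frac{\sigma_i}{2}(f_{i=1}-f_{i=-1})$ is a $\{0,\tfrac{\epsilon\sigma_i}{2}\}$-valued random variable, which satisfies the pointwise identity $(D_i^{(\mu)} f)^2 = \frac{\epsilon\sigma_i}{2} \cdot D_i^{(\mu)} f$.

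Taking $\mu$-expectations of this pointwise identity and combining with the two observations from the first paragraph immediately gives
\[
\sum_{S\ni i}\hat{f}(S;\mu)^2 \;=\; \E_\mu[(D_i^{(\mu)}f)^2] \;=\; \frac{\epsilon\sigma_i}{2}\,\E_\mu[D_i^{(\mu)}f] \;=\; \frac{\epsilon\sigma_i}{2}\,\hat{f}(i;\mu),
\]
which is the claimed equation. There is essentially no obstacle here beyond verifying that monotonicity plus the two-valued range of $f$ really does eliminate the possibility $f_{i=1}-f_{i=-1}=-\epsilon$; once that is noted, both the Poincaré-type bound and the ``influences equal linear coefficients for monotone functions'' phenomenon are absorbed into a single line of algebra via the indicator-like structure of $D_i^{(\mu)} f$.
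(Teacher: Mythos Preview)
Your argument is correct. The paper does not spell out a proof here; it simply notes that the identity follows from the standard ``influence equals linear coefficient'' fact for monotone $\{-1,1\}$-valued functions (as in \cite{o2014analysis}, Chapter 2.3) after the affine substitution $f = 1 + \tfrac{\epsilon}{2}(1+g)$. Your route unpacks exactly that argument directly in the $\mu$-biased setting: monotonicity together with the two-valued range forces $D_i^{(\mu)} f \in \{0,\tfrac{\epsilon\sigma_i}{2}\}$, whence $(D_i^{(\mu)} f)^2 = \tfrac{\epsilon\sigma_i}{2}\,D_i^{(\mu)} f$ pointwise, and Parseval on (\ref{eqn:diff}) plus $\E_\mu[D_i^{(\mu)} f]=\hat f(i;\mu)$ finish. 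The underlying mechanism is identical to the standard proof; the only difference is that you avoid the detour through $g$ and keep everything self-contained within the difference-operator formalism the paper has already set up.
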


\begin{prop} \label{prop:poinc} Let $f:\fbitset^n_\mu \to \{1,1+\epsilon\}$ and $\operatorname{Var}[f] = 
\E_\mu [f^2] - \E_\mu[f]^2$. Then 
$$\sum_{i \in [n]} \sum_{S\ni i} \hat{f}(S;\mu)^2  = \sum_{S\subseteq [n]} |S|\hat{f}(S;\mu)^2\geq \operatorname{Var}[f].$$
\end{prop}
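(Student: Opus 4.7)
My plan is to prove Proposition~\ref{prop:poinc} as the standard Poincar\'e (edge-expansion) inequality on the $\mu$-biased cube, without using the specific $\{1,1+\epsilon\}$ codomain of $f$; that restriction is incidental, as the statement with Fourier coefficients and variance both rescaled by $(\epsilon/2)^2$ follows from substituting $f=1+\tfrac{\epsilon}{2}(1+g)$ for a Boolean $g:\fbitset^n\to\fbitset$ and invoking the Boolean-valued version, as the paper indicates.

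First, I would observe the purely combinatorial identity
$$\sum_{i\in[n]}\sum_{S\ni i}\hat{f}(S;\mu)^2 \;=\; \sum_{S\subseteq[n]} |S|\,\hat{f}(S;\mu)^2,$$
which is immediate upon swapping the order of summation: each $S$ contributes its squared coefficient once for every $i\in S$, hence $|S|$ times in total. Next, since $\{\phi^\mu_S\}_{S\subseteq[n]}$ is an orthonormal basis for $L^2(\fbitset^n_\mu)$ and $\hat{f}(\emptyset;\mu)=\E_\mu[f]$, Parseval's identity yields
$$\operatorname{Var}[f] \;=\; \E_\mu[f^2]-\E_\mu[f]^2 \;=\; \sum_{S\neq\emptyset}\hat{f}(S;\mu)^2.$$

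Finally, I combine these two facts with the trivial inequality $|S|\ge 1$ for every nonempty $S$, noting that the $S=\emptyset$ term contributes zero to the weighted sum:
$$\sum_{S\subseteq[n]}|S|\,\hat{f}(S;\mu)^2 \;=\; \sum_{S\neq\emptyset}|S|\,\hat{f}(S;\mu)^2 \;\ge\; \sum_{S\neq\emptyset}\hat{f}(S;\mu)^2 \;=\; \operatorname{Var}[f],$$
which is the claimed bound. I do not anticipate any real obstacle; the proposition is a direct consequence of Parseval's identity and a counting identity, both standard features of the biased Fourier basis reviewed at the start of Section~\ref{sec:monotone}, and the codomain restriction merely controls the magnitude of $\operatorname{Var}[f]$ rather than affecting the proof itself.
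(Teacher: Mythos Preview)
Your proof is correct and is exactly the standard argument the paper has in mind: the paper does not prove Proposition~\ref{prop:poinc} but simply cites it as a well-known fact (Parseval plus $|S|\ge 1$), and your write-up supplies precisely that.
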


Equation \eqref{eqn:select2} tells us that the bias of each bit $i$ increases according to the corresponding coefficient $\hat{f}(i)$.  Proposition \ref{prop:influence} tells us that for monotone $f$, the linear coefficients correspond to the influences of $f$.
 Finally, the inequality of Poincar\'{e}
tells us that the linear coefficients must be large.

We may now prove Theorem \ref{thm:monotone-main}.

\begin{thm_again}{\ref{thm:monotone-main}} 
Let $f:\fbitset^n \to \{1,1+\eps\}$ be monotone.  Then $\mu^t(f)\geq 1-\frac{n(1+\epsilon)}{\epsilon t\mu^0(f)}$.  
\end{thm_again}
\begin{proof}
Combining Equation \eqref{eqn:select2} with Propositions \ref{prop:influence} and \ref{prop:poinc} tells us that the sum of the biases increases at each step:
\begin{align*}
\sum_{i\in [n]}(\mu'_i - \mu_i) & =  \frac{2}{\epsilon \cdot \E_\mu[f]} \sum_{i\in [n]}\sum_{S \ni i} \hat{f}(S;\mu)^2 \\
& \geq \frac{2}{\epsilon \cdot \E_\mu[f ] } \operatorname{Var}[f]\\
& = \frac{2}{\epsilon \cdot \E_\mu[f]} \epsilon^2 \mu(f)(1-\mu(f))
\end{align*}
Let $\mu^t(f) = 1- \delta.$   Then for all $t'\leq t$,  the sum of the biases increases at each step: 
\begin{align*}
\sum_{i=1}^n \mu^{t'+1}_i -\sum_{i=1}^n \mu^{t'}_i & \geq \frac{2\eps\mu^{t'}(f)(1-\mu^{t'}(f))}{\E_\mu^{t'}[f]} \\ 
& \geq \frac{2\epsilon \mu^{t'}(f) \delta }{1+\epsilon} \geq \frac{2\delta \epsilon \mu^{0}(f)}{1+\epsilon}.
\end{align*}

On the other hand, we know that $-n \leq \sum_{i=1}^n \mu^{t'}_i \leq n$ for all $t'$, so $t \leq \frac{n (1+\epsilon)}{\delta \epsilon \mu^0(f)}.$
\end{proof}

We remark that Theorem \ref{thm:monotone-main} (with worse parameters) can also be proven using a generalization of the Russo-Margulis lemma to product distributions, which states that the gradient of $\E_\mu[f]$ (as a function of $\mu$) corresponds to the influences of $f$ (see Appendix \ref{sec:prelim}).

\section{The main result} 
\label{sec:main}
  For a function $f:\fbitset^n \to \{1,1+\epsilon\}$, consider the multilinear extension $\tilde{f}:[-1,1]^n\to [1,1+\epsilon]$ defined by $\tilde{f}(\mu) = \E_{x\sim \mu}[f(x)]$.  Our goal is to understand when $\tilde{f}(\mu)=1+\eps$. We start with the precise statement of the main result (compare with Theorem \ref{thm:main}):
\begin{thm}\label{thm:main-full}
Let \textup{$\beta=\sqrt{\frac{\epsilon}{N\left(1-n\epsilon\right)}}$.} If 
\[
\tilde{f}\left(\mu^{0}\right)>1+\sqrt{2 \beta \ln \frac{2}{\beta}}\]
 then there is some constant $C$ such that for any $T\geq C\cdot \frac{\epsilon n^{8}\cdot N^{4}}{1-n\epsilon}$:
 \[
 \Pr\left[\tilde{f}\left(\mu^{\left(T\right)}\right)=1+\epsilon\right]\geq 1-2\beta-2/n.\]

\end{thm}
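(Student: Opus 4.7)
I plan to track the potential $Z_t := \tilde f(\mu^t)$, which lies in $[1,1+\epsilon]$. Since $\tilde f(\mu)=1+\epsilon$ forces $\mu$ to be a product distribution supported only on satisfying inputs, and any product distribution with some $\mu_i\in(-1,1)$ places positive mass on every point of $\{-1,1\}^n$, the event $\{\tilde f(\mu^T)=1+\epsilon\}$ is exactly the event that $\mu^T$ is a satisfying $\pm 1$-vertex. Such vertices are absorbing: once $\mu^t_i \in\{-1,1\}$, every sample agrees with it, so $\nu^t_i = \mu^t_i$ and the selection rule \eqref{eqn:select-noise} preserves $\mu^{t+1}_i = \mu^t_i$. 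It therefore suffices to show that, except on an event of probability at most $2\beta+2/n$, the process gets absorbed at a satisfying vertex within $T$ generations.

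\textbf{Submartingale and positive drift.} Next I would show $(Z_t)$ is a submartingale with a quantitatively positive drift at every non-absorbed state. The sampling step is conservative in expectation: because $\tilde f$ is multilinear and $\E[\nu^t_i\mid\mu^t]=\mu^t_i$, we have $\E[\tilde f(\nu^t)\mid\mu^t]=\tilde f(\mu^t)$. Combining \eqref{eqn:select2} with the identity $\hat f(i;\nu)=\sigma_i(\nu)\,\partial_i\tilde f(\nu)$ rewrites the selection step as $\mu^{t+1}_i-\nu^t_i=\sigma_i^2(\nu^t)\,\partial_i\tilde f(\nu^t)/\tilde f(\nu^t)$, and a (terminating) multilinear Taylor expansion of $\tilde f(\mu^{t+1})$ around $\nu^t$ has first-order term $\sum_i \sigma_i^2(\nu^t)\bigl(\partial_i\tilde f(\nu^t)\bigr)^{2}/\tilde f(\nu^t)\geq 0$, which, after taking expectations, yields the submartingale property. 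Even at ``spurious fixpoints'' of the noiseless dynamics such as parity centered at the origin, a strictly positive drift of order $\epsilon/N$ still emerges from sampling covariance $\E[\nu^t_i\nu^t_j\mid\mu^t]-\mu^t_i\mu^t_j = \mathbf{1}[i=j]\cdot(1-(\mu^t_i)^2)/N$ feeding into non-vanishing higher-order Fourier coefficients of $\tilde f$.

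\textbf{Concentration and absorption.} I would bound $|Z_{t+1}-Z_t|$ by an $O(\epsilon/\sqrt N)$-scale increment (exploiting the weak-selection hypothesis $n\epsilon<1$) and Doob-decompose $Z_t=Z_0+M_t+A_t$ into a martingale $M_t$ and a non-decreasing compensator $A_t$. Azuma--Hoeffding applied to $M_t$ gives $\Pr\bigl[\min_{t\leq T} M_t<-\sqrt{2\beta\ln(2/\beta)}\bigr]\leq 2\beta$, and combined with the initial condition $Z_0>1+\sqrt{2\beta\ln(2/\beta)}$ this keeps $Z_t>1$ for all $t\leq T$ with probability $\geq 1-2\beta$, ruling out absorption at a non-satisfying vertex. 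On that event, the drift lower bound accumulated over $T = \Omega(\epsilon n^8 N^4/(1-n\epsilon))$ generations (together with the ceiling $Z_t\leq 1+\epsilon$) forces $\mu^t$ to spend enough time in the immediate vicinity of a satisfying vertex that actual pinning at that vertex occurs; a coordinate-wise union bound over the $n$ coordinates for the chance that sampling noise pins the wrong value during the near-vertex window contributes the remaining $2/n$ to the failure probability.

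\textbf{Main obstacle.} The hardest step is the drift lower bound at the spurious fixpoints, where the first-order Fourier contribution vanishes and one must harvest the $O(\epsilon/N)$ second-order contribution and then relate it globally to $1+\epsilon-\tilde f(\mu^t)$, presumably via a Poincar\'e or Russo--Margulis type inequality in the $\nu^t$-biased Fourier basis. Making this quantitative enough to match the theorem's $N^4$ and $n^8$ dependencies, while controlling the boundary degeneracies as $\sigma_i^2(\nu^t)\to 0$ and harmonizing the drift with Azuma's concentration scale $\sqrt{2\beta\ln(2/\beta)}$, is where most of the technical bookkeeping will live.
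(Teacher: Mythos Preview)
Your plan has the right high-level shape---track $Z_t=\tilde f(\mu^t)$ as a submartingale, control the martingale part, argue absorption---but two of its load-bearing steps do not go through.

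\textbf{Concentration.} Azuma--Hoeffding cannot deliver the bound you state. The almost-sure increment bound is only $|Z_{t+1}-Z_t|\le\epsilon$ (the range of $\tilde f$), and even an $O(\epsilon/\sqrt N)$ scale would make $\sum_{t\le T}c_t^2$ grow linearly in $T$, while the theorem requires $T\ge C\epsilon n^8N^4/(1-n\epsilon)$. The paper's route is different: it shows (Lemma~\ref{lem:noise-fitness}) that each conditional variance $\E[(\tilde f(\nu^t)-\tilde f(\mu^t))^2\mid\mathcal F_{t-1}]$ is at most $1/((N{-}1)(1{-}n\epsilon))$ times the expected \emph{selection gain} $\tilde f(\mu^{t+1})-\tilde f(\nu^t)$; summing over $t$ the selection gains telescope to at most $\epsilon$, so the \emph{total} quadratic variation is at most $2\epsilon/((N{-}1)(1{-}n\epsilon))=\beta^2$, independently of $T$. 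A Bernstein-type inequality for martingales with unbounded jumps (Lemma~\ref{lem:generalized-bernstein}) then converts this into the $2\beta$ tail bound. The telescoping is the essential idea; without it no per-step estimate survives multiplication by $T$.

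\textbf{Drift at spurious fixpoints.} The ``strictly positive drift of order $\epsilon/N$'' is not there. For $n$-variable parity at the uniform point one has $\hat f(i;\nu)=\pm\tfrac{\epsilon}{2}\sigma_i(\nu)\prod_{j\ne i}\nu_j$, hence $\E_\nu\bigl[\sum_i\hat f(i;\nu)^2\bigr]\asymp n\epsilon^2/N^{\,n-1}$, which is exponentially small in $n$---far too small to force absorption in the stated $T$. The paper does \emph{not} escape spurious fixpoints via fitness drift. Its absorption argument (Lemma~\ref{lem:fast-vertex}) tracks each coordinate $\mu^t_j$ on its own: whenever $|\mu^t_j|<1-\alpha$ with $\alpha=(n^2N)^{-1}$, the sampling step contributes variance $\ge\alpha/(2N)$ to $\nu^t_j-\mu^t_j$ regardless of $f$, so over an interval of length $T_1$ the squared sum $\bigl(\sum_t \nu^t_j-\mu^t_j\bigr)^2$ grows like $T_1$; but since $|\mu^{t_0+T_1}_j-\mu^{t_0}_j|\le 2$, the selection increments $\sum_t(\mu^{t+1}_j-\nu^t_j)$ must nearly cancel this, and by Cauchy--Schwarz together with Lemma~\ref{lem:density} their square is controlled by $T_1$ times the total fitness increase on the interval. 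Chaining $T_2$ such intervals and using that the overall fitness increase is at most $\epsilon$ forces each coordinate to become $\alpha$-determined (hence pinned with probability $1-1/n^2$) within $T_1T_2=O(\epsilon n^8N^4/(1-n\epsilon))$ steps---with no appeal to a lower bound on the drift of $\tilde f$. What you flagged as ``bookkeeping'' is in fact a different mechanism.
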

Note that the conditions in Theorem \ref{thm:main-full} imply restrictions on the initial probability of satisfaction and the strength of selection.  In particular, the selection coefficient must be in the range $1/N^{1/3} < \epsilon < 1/n$ (we discuss this restriction in the next section), and the initial probability of satisfaction must be at least $N^{-1/4}$.  
The full proof of the theorem is given in the Appendix; in this section we sketch its salient points.

One first difficulty in the proof is this:  The convergence proof gauges the improvement in average population fitness obtained during the second of the two steps per generation (the fitness step).  However, the first of the two steps (the sampling step) introduces variance, and we must establish that this variance is insignificant in comparison with the increase in fitness.  Our first lemma (Lemma \ref{lem:noise-fitness}) establishes that the difference between the average fitness of the sample and the average fitness, squared (that is to say, the variance introduced), is bounded from above by the increase in average fitness obtained in the fitness step: 
\begin{align}\label{eqn:variance-lemma}
\E_{\nu\sim B}[ (\tilde{f}(\nu) -\tilde{f}(\mu))^2 ]   \leq 
 \E_{\nu\sim B} [ \tilde{f}(\mu') - \tilde{f}(\nu) ] /   
[(N-1)\cdot (1-n\epsilon)].
\end{align}
Here we focus on one generation, so $\mu$ denotes the product distribution from which the sampling is made, $\nu$ the empirical product distribution of the sample (note that $\tilde{f}(\nu)$ is a random variable with expectation $\tilde{f}(\mu)$), and $\mu'$ the product distribution resulting from the selection (or fitness) step.  Thus, $\mu'$ is the initial product distribution in the next generation. 

To establish inequality \eqref{eqn:variance-lemma}, we first show that the right-hand side is lower bounded by the total mass of the singleton Fourier coefficients of the biased transform (Lemma \ref{lem:density}):
\begin{align}\label{eqn:coord}
\tilde{f}(\mu') - \tilde{f}(\nu) \geq (1-n\epsilon) \sum_{i=1}^n \hat{f}(i;\nu)^2.
\end{align}

The intuition in the proof of \eqref{eqn:coord} is that the fitness step is very close to an $\epsilon$-long step of the gradient ascent of the average fitness function (this intuition is very accurate away from the boundary of the hypercube).  Gradient ascent in each coordinate is captured by the corresponding singleton coefficient squared.  But then there is an analytical complication of approximating the overall ascent by the sum of sequential coordinate-wise ascents; the difficulty is, of course, that the partial derivatives change after each small ascent, and the change must be bounded (Lemma \ref{lem:coord}).  

This establishes that the fitness increase in the selection step is larger than the linear Fourier mass, and hence nonnegative when $\epsilon$ is small.  However, the linear Fourier mass may be zero, as is the case for the exclusive-or function under the uniform distribution (recall the discussion a few lines after Theorem \ref{thm:monotone-main}).  Here, sampling effects will ensure that progress is made in expectation.
We show that, on average, the linear Fourier mass is much larger than the variance (Lemma \ref{lem:variance-noise}):

\begin{align}\label{eqn:lin-coeff}
\E_{\nu\sim B} \left[ (\tilde{f}(\mu) - \tilde{f}(\nu))^2 \right]  \leq 
 \frac{1}{N-1} &\E_{\nu\sim B} \left[ \sum_{i=1}^n \hat{f}(i;\nu)^2 \right]
\end{align}

The rather involved proof of \eqref{eqn:lin-coeff} takes place entirely within the biased Fourier domain (see Appendix \ref{subsec:variance-noise}).  
Now notice that \eqref{eqn:lin-coeff}, combined with \eqref{eqn:coord}, completes the proof of inequality \eqref{eqn:variance-lemma} and Lemma \ref{lem:noise-fitness}.  

Note that the upper bound on the variance in \eqref{eqn:variance-lemma} includes in the denominator a factor of $(1-\epsilon n)\cdot N$.  This immediately tells us that our technique is sharpest when the population $N$ is large and the selection strength $\epsilon$ is small --- in particular, it {\em must} be smaller than $1\over n$.  This latter point is a rather puzzling  limitation of our result:  Why does a theorem about the effectiveness of natural selection become harder to prove when selection is stronger?  One intuitive explanation is that in this case selection works very much like gradient ascent, and it is well known that the convergence of gradient ascent is harder to establish when the ascent step is large, as a large step can ``skip over'' the stationary point sought.  Is this upper limit on $\epsilon$ necessary?  This is an intriguing open question discussed in the last section.

Next, we establish that the total effect of the sampling steps is small:  For any $\alpha > \sqrt{2\beta\ln 2\beta^{-1}}$,
$$\Pr[ | \sum_{t=1}^T \tilde{f}(\nu^t)   -\tilde{f}(\mu^{t-1}) | \geq \alpha ] \leq 2\beta,$$
where $\beta = \left(\frac{\epsilon}{N(1-n\epsilon)}\right)^{1/2}$.

It is not hard to see that the sum  $  \sum_{t=1}^T \tilde{f}(\nu^t)   -\tilde{f}(\mu^{t-1})$ constitutes a martingale, albeit one with no obvious upper bound on each step.   In Lemma \ref{lem:main} we bound the total effect of the sampling step by resorting to a rather exotic martingale inequality derived from a generalization of Bernstein's inequality to martingales with unbounded jumps and proved in \cite{bernstein_generalized} (in fact, a specialization stated in Appendix \ref{sec:martingale} as Lemma \ref{lem:generalized-bernstein}).  

Incidentally, notice that this is the place where it is proved, quite indirectly, that the sampling step succeeds in getting the process unstuck from spurious fixpoints such as $({1\over 2},{1\over 2},\ldots,{1\over 2})$ for the exclusive-or function: 
Since the total effect of sampling is limited, the increase in average fitness must eventually prevail.

Finally, when the process is near a vertex of the hypercube, fitness increases are too small to help finish the argument, but here we rely on the fact that the process is very likely to drift so close to a vertex that it will eventually get stuck there (Lemma \ref{lem:fast-vertex}), completing the proof of the main result.

\section{Discussion}
\label{sec:discussion}

We proved a novel and highly nontrivial aspect of Boolean satisfiability:  By randomly crossing assignments and favoring satisfaction slightly, one can breed a population of pure satisfying truth assignments.  We argued that this rather curious property seems important in understanding one intriguing aspect of evolution: how complex traits controlled by many genes can emerge.

There are many roads of mathematical inquiry opened by this theorem.  First, can the limitations/restrictions of our model be relaxed so that it better reflects the realities of life? 
Some of the assumptions in our model are arguably unrealistic (haploidy, fixed population size, random mating, partly in-expectation fitness calculation), but these follow widely accepted practices in population genetics needed for mathematical simplification. 
  We also make the assumption of weak selection, but this is also a very defensible one for unlinked loci.

There are, however, a few further restrictions of our model that call for discussion: 

\begin{itemize}
\item {\em Two alleles per gene.}  The motivation is, of course, that this assumption ushers in the powerful analytical toolbox of Boolean functions.  We have no doubt that similar results hold for more alleles, but would require a great number of technical adjustments.

\item {\em Fitness landscape.}  We assumed a very specialized fitness landscape with values $1$ and $1+\epsilon$ only.  This is a natural simplification that facilitates the connection to Boolean functions, but we do not believe it is an essential one.  We believe that this result can be extended to much richer landscapes with a small gap, for example to situations in which fitness values are in $[1-\delta, 1]\cup [1+\epsilon, 1+\epsilon +\delta]$ for some small $\delta>0$.  
\end{itemize}

A harder question is, what happens if the fitness gap $\epsilon$ is larger?  As we have mentioned, this is an analytical challenge with roots in the  difficulty of the analysis of gradient descent.  Of course, a constant gap would bring us outside the realm of weak selection, and render our approximation by product distribution baseless.  There are two ways we can proceed:  One is to prove that the exact recurrence equations of genotype frequencies yield eventual satisfaction.  This seems possible but challenging.  

Another avenue, which we have followed for some time, is to work with product distributions anyway.  In particular, what if the fitness landscape has values $\{0,1\}$  --- that is to say, non-satisfying truth assignments are removed from the population, as in Waddington's experiment?  This is a realistic approximation if, for example, this selection does not happen in every generation but every $O(\log n)$ generations (because breeding without selection is known to take you close to the Wright manifold).  
 In such a setting, our quadratic bound for the in-expectation process of monotone functions no longer requires any dependence on the initial probability of satisfaction $\mu^0(f)$.  For the process with sampling, we have the following conjecture.

\paragraph{Conjecture:}  {\em If the fitness landscape has values $\{0,1\},$ then the process reaches near universal satisfaction with probability approaching $1$ as the population size goes to infinity.}

\bigskip  We now want to point out an obvious and yet surprising aspect of our work: 
In the traditional framework of adaptive evolution, each allele spreads in the population mainly either due to an additive contribution to fitness that it makes in and of itself (let us call this ``traditional propagation") or due to random 
genetic drift \cite{:1930fk,Wright:1931uq,Wright:1932kx,WadeGoodnight1998}. 
In our model, however, alleles at different genes are spreading in the population as governed by the complex interactions between them that are continually subject to selection. Thus, a population can change dramatically through a novel process involving subtle changes in genetic statistics and simultaneous gradual emergence in the whole population \cite{Livnat2013,Mayr1963}, and not by traditional propagation.  

Furthermore, notice that since recombination is a crucial ingredient of our analysis, our results inform the question of the role of sex in evolution. In this regard they add to recent works that have begun to examine the role of sex while giving full weight to the importance of genetic interactions \cite{Chastain2014algorithms,Livnat2008mixability}. 

Finally, can our bounds be improved?   
For the monotone case, it is easy to see that the TRIBES function with appropriate fan-in provides a matching lower bound. 
As for the general case, we feel that the very generous bounds of the main result can be improved substantially.  For example, the assumed time bound is only necessary in order to finish the last part of the argument (convergence to a vertex) once the vast majority of the population is already satisfying; more analysis is needed to investigate this subtle phenomenon.

Our proof that the population converges to a single satisfying truth assignment may seem a troubling aspect of our result.  
Two remarks: First, the loss of genetic diversity should not be surprising in itself.  With drift alone, for each locus,  one allele will become fixed eventually (where the probability that a particular allele will be the fixed allele is proportional to its current frequency in the population). 
Second, in our process many satisfying truth assignments are likely to survive for a very long time before the random walk clears the picture.  This fact may be more relevant than the characteristics of eqilibrium; after all, evolution happens in the transient. 


\paragraph{Acknowledgments:}  We are grateful to Yu Liu of Tsinghua University for some very interesting conversations in the beginning of this research.

\bibliography{boolution}
\appendix
\section{Outline of proof}
In the following sections, we prove Theorem \ref{thm:main-full}: 
\begin{thm*}
Let \textup{$\beta=\sqrt{\frac{\epsilon}{N\left(1-n\epsilon\right)}}$.} If 
\[
\tilde{f}\left(\mu^{0}\right)>1+\sqrt{2 \beta \ln \frac{2}{\beta}}\]
 then there is some constant $C$ such that for any $T\geq C\cdot \frac{\epsilon n^{8}\cdot N^{4}}{1-n\epsilon}$:
 \[
 \Pr\left[\tilde{f}\left(\mu^{\left(T\right)}\right)=1+\epsilon\right]\geq 1-2\beta-2/n.\]
\end{thm*}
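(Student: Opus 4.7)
The plan is to decompose each generation into its sampling and selection halves and to control them separately. Writing $S_t := \tilde f(\nu^t) - \tilde f(\mu^t)$ for the zero-mean sampling fluctuation and $A_t := \tilde f(\mu^{t+1}) - \tilde f(\nu^t)$ for the nonnegative selection gain, the process telescopes as
\[
\tilde f(\mu^T) - \tilde f(\mu^0) \;=\; \sum_{t=0}^{T-1} A_t \;+\; \sum_{t=0}^{T-1} S_t \;=:\; \Sigma_A + M_T,
\]
where $M_t := \sum_{s<t} S_s$ is a martingale. Lemma~\ref{lem:density} gives $A_t \geq (1-n\epsilon)\sum_i \hat f(i;\nu^t)^2 \geq 0$, and since $\tilde f$ lives in $[1,1+\epsilon]$ we get $\Sigma_A \leq \epsilon - M_T$, hence $\E[\Sigma_A]\leq \epsilon$.

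I would next bound the quadratic variation of $M_T$. Lemma~\ref{lem:noise-fitness} yields $\E[S_t^2 \mid \mu^t] \leq \E[A_t \mid \mu^t]/[(N-1)(1-n\epsilon)]$, so summing and using the preceding inequality,
\[
\sum_{t=0}^{T-1} \E[S_t^2] \;\leq\; \frac{\E[\Sigma_A]}{(N-1)(1-n\epsilon)} \;\leq\; \frac{\epsilon}{(N-1)(1-n\epsilon)} \;\approx\; \beta^2.
\]
The individual $S_t$ are not uniformly small, so I would feed this variance bound into the variance-sensitive Bernstein inequality for martingales with unbounded jumps (Lemma~\ref{lem:generalized-bernstein}) to obtain
\[
\Pr\!\left[|M_T| \geq \sqrt{2\beta \ln(2/\beta)}\right] \;\leq\; 2\beta.
\]
On the complementary event the sampling noise never eats more than $\alpha := \sqrt{2\beta\ln(2/\beta)}$ of the fitness, so the hypothesis $\tilde f(\mu^0) > 1 + \alpha$ keeps $\tilde f(\mu^t) > 1$ for every $t \leq T$---in particular the walk cannot linger at a spurious fixpoint such as the center of the cube for parity.

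Finally I would upgrade ``$\tilde f > 1$'' to the stronger conclusion ``$\tilde f = 1+\epsilon$'' by a drift-to-vertex argument. While $\mu^t$ remains a macroscopic distance from every vertex, the linear Fourier mass $\sum_i \hat f(i;\nu^t)^2$ is bounded below, so $A_t$ is bounded below and $\Sigma_A$ would overshoot $\epsilon$ within time much less than the allotted $T = \Theta(\epsilon n^8 N^4/(1-n\epsilon))$, contradicting $\E[\Sigma_A] \leq \epsilon$. Hence with high probability the walk enters a neighborhood of some vertex, where Lemma~\ref{lem:fast-vertex} says it is absorbed into a satisfying vertex with failure probability at most $2/n$. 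A union bound with the martingale tail yields the stated failure probability $2\beta + 2/n$.

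The hard part will be the martingale concentration. The jumps $S_t$ can a priori be of order $\epsilon$ while $T$ is polynomial in $n$ and $N$, so Hoeffding-type bounds give essentially no control, and even a vanilla Bernstein is too weak because the cubic-jump correction term dominates the variance term. The crucial leverage comes from pairing Lemma~\ref{lem:density} with Lemma~\ref{lem:noise-fitness}: whenever sampling variance is large the selection step gains a commensurately large amount of fitness, so the \emph{aggregate} quadratic variation is bounded by the bounded total selection gain. Feeding this into the unbounded-jump Bernstein (Lemma~\ref{lem:generalized-bernstein}) is what delivers the tail bound. A secondary nuisance is the drift-to-vertex step, which is responsible for the generous polynomial time bound and likely admits a sharper analysis but is not the bottleneck for the qualitative claim.
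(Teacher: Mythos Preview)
Your martingale analysis is essentially the paper's: the same telescoping into sampling increments $S_t$ and selection gains $A_t$, the same use of Lemma~\ref{lem:noise-fitness} to bound the quadratic variation by the total selection gain $\E[\Sigma_A]\leq\epsilon$, and the same appeal to the unbounded-jump Bernstein inequality (Lemma~\ref{lem:generalized-bernstein}) to control $|M_T|$. That part is fine.

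The gap is in your drift-to-vertex step. You assert that ``while $\mu^t$ remains a macroscopic distance from every vertex, the linear Fourier mass $\sum_i\hat f(i;\nu^t)^2$ is bounded below.'' This is false as stated: for $f=x_1\oplus\cdots\oplus x_n$ at $\mu=0$, all linear coefficients vanish, and after sampling the \emph{expected} linear mass is only of order $N^{-(n-1)}$ (the only nonzero $\mu$-biased coefficient is the top one, and $\E_B[\phi^\mu_{[n]\setminus i}(\nu)^2]=N^{-(n-1)}$). So being far from a vertex does not force $A_t$ to be bounded below, and your overshoot-contradiction argument does not go through. You also mischaracterize Lemma~\ref{lem:fast-vertex}: it is not an absorption statement that takes ``near a vertex'' as input; it is the \emph{entire} argument that $\mu^T$ lands on a vertex with probability $\geq 1-2/n$, proved coordinate-by-coordinate. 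The mechanism is different from yours: for each coordinate $j$ that is never $\alpha$-determined, the \emph{sampling} noise $\nu_j^t-\mu_j^t$ has variance $\gtrsim \alpha/N$ per step, so its square accumulates linearly in time; but since $\mu_j$ is trapped in $[-1,1]$, this accumulated noise must be offset by the cumulative selection displacement $\sum_t(\mu_j^{t+1}-\nu_j^t)$, which via Cauchy--Schwarz and Lemma~\ref{lem:density} is controlled by the total fitness gain $\leq\epsilon$. That tension forces $\Pr[A_{j;T}]$ small. If you simply invoke Lemma~\ref{lem:fast-vertex} as a black box and drop your own neighborhood argument, the proof closes exactly as in the paper.
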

Our proof of Theorem \ref{thm:main-full} is structured as follows. In Section \ref{sec:variance-noise}, we consider  
the average fitness from one generation to the next. As described in Section \ref{sec:intro}, each generation consists of two steps: the sampling step, which begins with a product distribution $\mu$ and results in an empirical product distribution $\nu$, and the fitness step resulting in a distribution $\mu'$ (which becomes the initial distribution for the next generation). The main lemma (and the most involved step of our proof) of Section \ref{sec:variance-noise} is Lemma \ref{lem:noise-fitness}, which upper bounds the variance of $\tilde{f}(\nu)$, 
by a small fraction of $\E[\tilde{f}(\mu')-\tilde{f}(\nu)]$, the expected increase in average fitness by the fitness step.       

In Section \ref{sec:martingale}, 
we apply Lemma \ref{lem:noise-fitness} with the martingale inequality to prove Lemma \ref{lem:main}, which states that the total fitness decrease will be small with high probability. Finally, we complete the proof of the main theorem in Section \ref{sec:finalproof} by arguing (Lemma \ref{lem:fast-vertex}) that for $T$ as stated in the theorem, $\mu^T$ will reach a vertex of the hypercube (and hence $f(\mu^T) \in \{1,1+\epsilon\}$) with high probability.                 

\section{Selection vs sampling effects}\label{sec:variance-noise}
In this section we consider just one step of the process. 
Let $\mu$ be the initial product distribution of a generation,  
$\nu$ be the empirical product distribution from the sampling step, and $\mu'$ be the product distribution after the fitness step.  
Our main goal in this section is to show that the variance of $\tilde{f}(\nu)$, the average fitness of the population after the sampling step, is small compared to the expected increase in average fitness from the subsequent selection step, $\tilde{f}(\mu')-\tilde{f}(\nu)$.
The main lemma we will prove is the following:    
\begin{lem} \label{lem:noise-fitness} 
Let $\nu$ be the vector of expectations of allele frequencies in the population sample of size $N$, drawn according to $B(\mu)$. Then:
$$\E_{\nu\sim B}[ (\tilde{f}(\nu) -\tilde{f}(\mu))^2 ] \leq
 \E_{\nu\sim B} [ \tilde{f}(\mu') - \tilde{f}(\nu) ]/   
(N-1)\cdot (1-n\epsilon).$$

\end{lem}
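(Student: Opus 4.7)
The plan is to derive Lemma \ref{lem:noise-fitness} by sandwiching the quantity $\sum_{i=1}^n \hat{f}(i;\nu)^2$ between two auxiliary inequalities, exactly the ones highlighted in the main-text outline. The first is a deterministic lower bound \eqref{eqn:coord},
\[
\tilde{f}(\mu') - \tilde{f}(\nu) \;\geq\; (1-n\epsilon) \sum_{i=1}^n \hat{f}(i;\nu)^2,
\]
and the second is a probabilistic upper bound \eqref{eqn:lin-coeff},
\[
\E_{\nu\sim B}\!\left[(\tilde{f}(\nu)-\tilde{f}(\mu))^2\right] \;\leq\; \frac{1}{N-1}\, \E_{\nu\sim B}\!\left[\sum_{i=1}^n \hat{f}(i;\nu)^2\right].
\]
Taking the expectation of the first over $\nu\sim B(\mu)$ and chaining with the second cancels the common factor $\E_\nu[\sum_i \hat{f}(i;\nu)^2]$ and yields exactly the bound of the lemma, so the proof reduces to establishing these two inequalities.

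For the deterministic inequality, I would use a telescoping coordinate-update argument that exploits the multilinearity of $\tilde f$. Letting $\nu^{(i)}$ agree with $\mu'$ on $\{1,\dots,i\}$ and with $\nu$ elsewhere, multilinearity gives
\[
\tilde{f}(\mu') - \tilde{f}(\nu) \;=\; \sum_{i=1}^n (\mu'_i - \nu_i)\, \partial_i \tilde{f}\bigl(\nu^{(i-1)}\bigr).
\]
At the anchor point $\nu$ itself, Equation \eqref{eqn:select2} combined with the identity $\partial_i \tilde{f}(\nu) = \hat{f}(i;\nu)/\sigma_i(\nu)$ (which follows because $\tilde f$ is linear in $\nu_i$) makes the $i$th summand equal $\hat{f}(i;\nu)^2/\E_\nu[f] \geq \hat{f}(i;\nu)^2/(1+\epsilon)$. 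The remaining work is to control the drift $\partial_i \tilde{f}(\nu^{(i-1)})-\partial_i \tilde{f}(\nu)$. Since $f$ takes values in $[1,1+\epsilon]$, the mixed partials $\partial_{ij}\tilde f$ are $O(\epsilon)$ and each step satisfies $|\mu'_j - \nu_j|\leq \sigma_j\epsilon/2$ (using $|\hat f(j;\nu)|\leq \epsilon/2$), so the cumulative second-order loss is at most $O(n\epsilon)\cdot\sum_i \hat{f}(i;\nu)^2$, which absorbs the $(1+\epsilon)^{-1}$ into the claimed $(1-n\epsilon)$ factor.

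For the probabilistic inequality the cleanest route is a computation in the $\mu$-biased Fourier basis. Writing $r_j := (\nu_j-\mu_j)/\sigma_j(\mu)$, I would use the identity $\tilde f(\nu) = \sum_S \hat{f}(S;\mu)\prod_{j\in S} r_j$, the coordinate independence of $\nu$, and $\E[r_j]=0$, $\E[r_j^2]=1/N$ to obtain
\[
\E_\nu\!\left[(\tilde{f}(\nu)-\tilde{f}(\mu))^2\right] \;=\; \sum_{S\ne\emptyset} \hat{f}(S;\mu)^2\, N^{-|S|}.
\]
For the right-hand side I would apply the identity $\hat{f}(i;\nu)=\sigma_i(\nu)\,\partial_i\tilde f(\nu)$, the biased-Fourier expression $\partial_i\tilde f(\nu) = \sigma_i(\mu)^{-1}\sum_{S\ni i}\hat f(S;\mu)\prod_{j\in S\setminus\{i\}} r_j$, independence of $\nu_i$ from the remaining coordinates, and $\E[1-\nu_i^2] = \sigma_i(\mu)^2(N-1)/N$ to get
\[
\E_\nu\!\left[\sum_i \hat{f}(i;\nu)^2\right] \;=\; (N-1)\sum_{S\ne\emptyset}|S|\,\hat{f}(S;\mu)^2\, N^{-|S|}.
\]
Since $|S|\geq 1$ whenever $S\neq\emptyset$, the inequality \eqref{eqn:lin-coeff} follows term-by-term.

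The main obstacle is clearly the deterministic bound: a careless first-order Taylor expansion would give an $O(1)$ multiplicative loss, and the point is to prove that the combined second-order error in traversing from $\nu$ to $\mu'$ through the $n$ sequential coordinate updates accrues to only $O(n\epsilon)$ rather than a constant. This is exactly the regime where the lemma is informative, and it explains the restriction $\epsilon < 1/n$ used in Theorem \ref{thm:main-full}. The probabilistic bound, by contrast, is essentially a mechanical Fourier computation once one commits to the $\mu$-biased basis and exploits coordinate independence of the sample.
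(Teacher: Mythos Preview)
Your proposal is correct and follows essentially the same route as the paper: the paper proves Lemma~\ref{lem:noise-fitness} by combining exactly the two intermediate inequalities you identify, which it states separately as Lemma~\ref{lem:density} (your deterministic bound, proved via the same hybrid/telescoping argument and the bound $|\hat f(\{i,j\};\nu)|\le \epsilon\sigma_i\sigma_j/4$) and Lemma~\ref{lem:variance-noise} (your probabilistic bound, proved via the same $\mu$-biased Fourier computation, with your $r_j$ being the paper's $\phi_j^\mu(\nu)$). Your explicit identities $\E_\nu[(\tilde f(\nu)-\tilde f(\mu))^2]=\sum_{S\ne\emptyset}\hat f(S;\mu)^2 N^{-|S|}$ and $\E_\nu[\sum_i \hat f(i;\nu)^2]=(N-1)\sum_{S\ne\emptyset}|S|\hat f(S;\mu)^2 N^{-|S|}$ are precisely what the paper's Claims~\ref{eqn:parsevals}, \ref{eqn:lin-mu}, and Fact~\ref{eqn:lin-var} amount to once one notes $\E_B[\phi_S^\mu(\nu)^2]=N^{-|S|}$; the only cosmetic difference is that in the deterministic bound the paper makes the $O(n\epsilon)$ cross-term estimate explicit by a WLOG ordering of the $\sigma_i|\hat f(i;\nu)|$, which you would need when filling in your ``cumulative second-order loss'' sentence.
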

We will prove Lemma \ref{lem:noise-fitness} by proving two intermediate lemmas.  First, we show that fitness increase by the selection step $\tilde{f}(\mu') - \tilde{f}(\nu)$ is nearly as large as the $\nu$-biased Fourier weight of the linear coefficients of $f$ (and hence non-negative), provided that  $\epsilon$ is sufficiently small. 
\begin{lem} \label{lem:density} Let $\mu'$ be the expectations of the process after selection from the population $\nu$.   Then:
$$\tilde{f}(\mu') - \tilde{f}(\nu) \geq (1-n\epsilon) \sum_{i=1}^n \hat{f}(i;\nu)^2.$$
\end{lem}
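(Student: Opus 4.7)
My plan is to reduce the statement to a first-order analysis of $\tilde{f}$ at $\nu$ and then bound the second-order correction. Using the selection identity $\mu'_i-\nu_i=\sigma_i\,\hat{f}(i;\nu)/\E_\nu[f]$ from equation \eqref{eqn:select2} and the elementary fact $\partial\tilde{f}/\partial\mu_i\big|_\nu=\hat{f}(i;\nu)/\sigma_i$ (which follows from $\hat{f}(i;\nu)=\E_\nu[D_i^{(\nu)}f]$ together with $D_i^{(\nu)}f=\frac{\sigma_i}{2}(f_{i=1}-f_{i=-1})$), the linearized ascent along each coordinate is
\[
(\mu'_i-\nu_i)\cdot\frac{\partial\tilde{f}}{\partial\mu_i}\bigg|_\nu \;=\; \frac{\hat{f}(i;\nu)^2}{\E_\nu[f]}\;\ge\;(1-\epsilon)\,\hat{f}(i;\nu)^2,
\]
using $\E_\nu[f]\le 1+\epsilon$. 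So summed over $i$, the linearized ascent is already at least $(1-\epsilon)\sum_i\hat{f}(i;\nu)^2$.

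To upgrade the linearized ascent into a true bound on $\tilde{f}(\mu')-\tilde{f}(\nu)$, I would use a telescoping coordinate-by-coordinate decomposition. Let $\nu^{(k)}$ agree with $\mu'$ on coordinates $1,\dots,k$ and with $\nu$ on the rest, so that $\nu^{(0)}=\nu$ and $\nu^{(n)}=\mu'$. Because $\tilde{f}$ is linear in each individual variable, for every $k$ the identity
\[
\tilde{f}(\nu^{(k)})-\tilde{f}(\nu^{(k-1)}) \;=\; (\mu'_k-\nu_k)\cdot\frac{\partial\tilde{f}}{\partial\mu_k}\bigg|_{\nu^{(k-1)}}
\]
is \emph{exact}. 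Summing and subtracting the linearized ascent at $\nu$, the error reduces to $\sum_k(\mu'_k-\nu_k)\bigl[\partial_k\tilde{f}|_{\nu^{(k-1)}}-\partial_k\tilde{f}|_{\nu}\bigr]$.

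The key technical step is controlling the drift of the partial derivatives. Writing $f=1+\epsilon g$ with $g:\fbitset^n\to\{0,1\}$ makes $\tilde{g}\in[0,1]$, and a direct computation of the four-point formula for the mixed partial gives $|\partial^2\tilde{f}/\partial\mu_j\partial\mu_k|\le\epsilon/2$ for $j\neq k$. Since $\partial\tilde{f}/\partial\mu_k$ is multilinear in the remaining coordinates, a second telescoping (one coordinate $j<k$ at a time) yields
\[
\left|\frac{\partial\tilde{f}}{\partial\mu_k}\bigg|_{\nu^{(k-1)}}-\frac{\partial\tilde{f}}{\partial\mu_k}\bigg|_\nu\right|
\;\le\;\frac{\epsilon}{2}\sum_{j<k}|\mu'_j-\nu_j|.
\]
Using $|\mu'_j-\nu_j|\le|\hat{f}(j;\nu)|$ (because $\sigma_j\le 1$ and $\E_\nu[f]\ge 1$), summing over $k$, and applying Cauchy--Schwarz bounds the absolute value of the total error by $\frac{n\epsilon}{4}\sum_i\hat{f}(i;\nu)^2$.

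Combining everything gives
\[
\tilde{f}(\mu')-\tilde{f}(\nu)\;\ge\;\Bigl(1-\epsilon-\tfrac{n\epsilon}{4}\Bigr)\sum_i\hat{f}(i;\nu)^2\;\ge\;(1-n\epsilon)\sum_i\hat{f}(i;\nu)^2
\]
(the last step holds whenever $n\ge 2$, which is the regime of interest and in particular covers the $\epsilon<1/n$ restriction of the main theorem). I expect the main obstacle to be the drift bound: a naive approach that expands $\tilde{f}(\mu')-\tilde{f}(\nu)$ as a full multilinear Taylor series around $\nu$ sums over every nonempty $S\subseteq[n]$ and loses an extra factor of $n$ via products $\prod_{i\in S}|\mu'_i-\nu_i|$, yielding only $(1-\Theta(n\epsilon))$ with a worse constant. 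The sequential/telescoping decomposition, together with the exploitation of multilinearity at two separate levels, is precisely what is needed to match the advertised $(1-n\epsilon)$ factor.
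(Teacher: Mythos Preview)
Your proposal is correct and follows essentially the same route as the paper: both arguments telescope $\tilde{f}(\mu')-\tilde{f}(\nu)$ through hybrid points that switch one coordinate at a time, use multilinearity to write each increment exactly as $(\mu'_k-\nu_k)\,\partial_k\tilde{f}$ at the hybrid, and then control the drift of $\partial_k\tilde{f}$ via the bound $|\partial_j\partial_k\tilde{f}|\le\epsilon/2$ on $[-1,1]^n$ (the paper phrases this as $|\hat{f}(\{j,k\};\nu)|\le\tfrac{\epsilon}{4}\sigma_j\sigma_k$, which is the same thing). The only cosmetic difference is in how the cross terms $\sum_{j<k}|\hat f(j;\nu)||\hat f(k;\nu)|$ are dominated by $\sum_i\hat f(i;\nu)^2$: the paper reorders coordinates so that $\sigma_i|\hat f(i;\nu)|$ is monotone and uses $(i-1)$ copies of the largest term, whereas you use Cauchy--Schwarz via $(\sum_i|\hat f(i;\nu)|)^2\le n\sum_i\hat f(i;\nu)^2$; both yield the same $(1-n\epsilon)$ factor.
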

Next, we show that the variance of $\tilde{f}(\nu)$ is at most a small fraction of the expected linear $\nu$-biased Fourier mass of $f$: $\sum_{i=1}^n \hat{f}(i;\nu)^2$---here the expectation is taken over the choice of $\nu.$
\begin{lem}\label{lem:variance-noise}
$$\E_{\nu\sim B} \left[ (\tilde{f}(\mu) - \tilde{f}(\nu))^2 \right] \leq
 \frac{1}{N-1} \E_{\nu\sim B} \left[ \sum_{i=1}^n \hat{f}(i;\nu)^2 \right].$$ 
\end{lem}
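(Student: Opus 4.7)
The plan is to exploit two facts: the multilinearity of $\tilde{f}$, which lets us Taylor-expand it around $\mu$ in the deviation variables $Z_i := \nu_i - \mu_i$, and the mutual independence of the coordinates $\nu_1,\ldots,\nu_n$, which follows from the $N$ samples $y^{(1)},\ldots,y^{(N)}$ being drawn from a product distribution. Each $Z_i$ has mean $0$ and variance $\sigma_i^2/N$, where $\sigma_i^2 = 1-\mu_i^2$. The other ingredient, evaluated at the random point $\nu$, is the identity $\hat{f}(i;\nu) = \sigma_i(\nu)\,\partial_i\tilde{f}(\nu)$ --- which is precisely the same calculation as in the authors' derivation of Equation~\eqref{eqn:select2}, but applied at $\nu$ in place of $\mu$ --- so that $\hat{f}(i;\nu)^2 = (1-\nu_i^2)\,(\partial_i \tilde{f}(\nu))^2$.

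\textbf{Computing the left-hand side.} Writing $\tilde{f}(\nu) = \sum_{T\subseteq [n]} (\partial_T\tilde{f}(\mu))\prod_{i\in T}Z_i$ and using independence and mean-zeroness of the $Z_i$ to kill all cross-terms,
$$\E_{\nu\sim B}\bigl[(\tilde{f}(\nu)-\tilde{f}(\mu))^2\bigr] = \sum_{\emptyset\neq T\subseteq[n]} (\partial_T\tilde{f}(\mu))^2 \prod_{i\in T}\frac{\sigma_i^2}{N}.$$

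\textbf{Computing the right-hand side.} For each $i$, multilinearity implies that $\partial_i\tilde{f}(\nu)$ is a function of $\{\nu_j : j\neq i\}$ only, hence independent of $\nu_i$. Expanding $\partial_i\tilde{f}(\nu) = \sum_{T\ni i}(\partial_T\tilde{f}(\mu))\prod_{j\in T\setminus\{i\}}Z_j$ and applying independence again,
$$\E_{\nu\sim B}\bigl[(\partial_i\tilde{f}(\nu))^2\bigr] = \sum_{T\ni i} (\partial_T\tilde{f}(\mu))^2\prod_{j\in T\setminus\{i\}}\frac{\sigma_j^2}{N}.$$
Combining with $\E[1-\nu_i^2] = \sigma_i^2(N-1)/N$,
$$\E_{\nu\sim B}\bigl[\hat{f}(i;\nu)^2\bigr] = \frac{(N-1)\sigma_i^2}{N}\sum_{T\ni i}(\partial_T\tilde{f}(\mu))^2\prod_{j\in T\setminus\{i\}}\frac{\sigma_j^2}{N}.$$
Summing over $i$ and swapping the order of summation gives
$$\frac{1}{N-1}\E_{\nu\sim B}\Bigl[\sum_{i=1}^n\hat{f}(i;\nu)^2\Bigr] = \sum_{\emptyset\neq T\subseteq[n]} |T|\,(\partial_T\tilde{f}(\mu))^2\prod_{i\in T}\frac{\sigma_i^2}{N}.$$

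Comparing the two displays, each nonempty $T$ contributes with coefficient $1$ on the left and $|T|\geq 1$ on the right, which finishes the proof. The argument is essentially clean variance bookkeeping in the basis $\bigl\{\prod_{i\in T} Z_i\bigr\}$; the ``insight'' (rather than a real obstacle) is choosing the deviation variables $Z_i = \nu_i-\mu_i$ so that multilinearity, mean-zeroness, and independence diagonalize both sides in the same basis --- the factor $1-\nu_i^2$ coming from $\sigma_i(\nu)^2$ contributes exactly the $(N-1)/N$ responsible for the prefactor $1/(N-1)$ in the lemma, while the combinatorial factor $|T|$ on the right arises because each level-$|T|$ monomial is hit by $|T|$ distinct partial derivatives.
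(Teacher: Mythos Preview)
Your proof is correct and is essentially the same argument as the paper's, just in different notation: the paper works in the $\mu$-biased Fourier basis $\phi_T^\mu(\nu)=\prod_{i\in T}(\nu_i-\mu_i)/\sigma_i(\mu)$ with coefficients $\hat f(T;\mu)$, while you use the equivalent Taylor/derivative form with $\partial_T\tilde f(\mu)$ (the two are related by $\hat f(T;\mu)=\partial_T\tilde f(\mu)\prod_{i\in T}\sigma_i$). In both cases the left side becomes $\sum_{T\neq\emptyset} a_T$ and the right side $\sum_{T\neq\emptyset}|T|\,a_T$ for the same nonnegative weights $a_T$, and the inequality follows from $|T|\ge 1$.
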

Combining Lemmas \ref{lem:density} and \ref{lem:variance-noise} gives us Lemma \ref{lem:noise-fitness}.  

\subsection{Preliminaries} \label{sec:prelim}
Throughout our proofs, we will use the notation and basic facts established at the beginning of Section \ref{sec:monotone}.  At times, we will use biased Fourier analysis with different product distributions $\mu$ and $\mu'$ at the same time.  To prevent ambiguity, we will refer to the standard deviation of the $i$'th bit as $\sigma_i(\mu) = 1-\mu_i^2$ (similarly for $\sigma_i(\mu'))$, but we will use $\sigma_i$ when the context makes the distribution clear.           

Recall that the exentsion of $f:\fbitset^n\to \{1,1+\epsilon\}$,  
$$\tilde{f}(\mu) = \E_{x\sim \mu}[ f(x) ]=\sum_{S\subseteq [n]} \hat{f}(S) \prod_{j\in S} \mu_j $$
is multilinear. Note that its derivative in the $i$'th direction is given by 
\begin{align}
 \drv{\tilde{f}(\mu)}{\mu_i} &=  \sum_{S\ni i} \hat{f}(S) \prod_{j\in S\setminus i} \mu_j =\E_\mu[ D^{(1/2)}_i f] \nonumber \\ 
 &=  \frac{1}{\sigma_i}\E_\mu[ D^{(\mu)}_i f]  \nonumber \\
 &= \frac{\hat{f}(i;\mu)}{\sigma_i}.
\label{eqn:russo}
 \end{align}
 Here (\ref{eqn:russo}) is a straightforward generalization of the Russo-Margulis Lemma~\cite{Margulis:74, Russo:78} for product distributions. Thus, we may write the change in allele frequency from the fitness step as: 
\begin{equation}\label{eqn:prob-step}
\mu'_i - \nu_i =   \sigma_i(\nu) \frac{\hat{f}(i;\nu)}{\E_{\nu}[f]}= \frac{\sigma_i^2(\nu)}{\tilde{f}(\nu)} \drv{\tilde{f}(\nu)}{\nu_i},
\end{equation}
where the first equality holds by Equation \eqref{eqn:select2} (derived in Section \ref{sec:monotone}).

\subsection{Proof of Lemma \ref{lem:density}}
We will compute the fitness change as each coordinate changes.   
Consider the hybrid distributions given by the expectations 
$w^i = (\mu'_1,\ldots,\mu'_{i}, \nu_{i+1},\ldots,\nu_n)$
so that 
$w^0 = \nu$ and $w^n = \mu'$. We have that 
$$ \tilde{f}(\mu') - \tilde{f}(\nu) = \sum_{i=1}^n  \tilde{f}(w^{i}) - \tilde{f}(w^{i-1}).$$
Observe that the first increment is easily computed as
$$\tilde{f}(w^1)- \tilde{f}(\nu) = (\mu'_1-\nu) \drv{\tilde{f}(\nu)}{\nu_1} = \frac{\hat{f}(1;\nu)^2}{\E_\nu[f]}.$$ 
However, this formula will not be valid for subsequent hybrids as the derivatives of $\tilde{f}$ have changed. 
We start by showing that the derivatives in each direction do not change much between the hybrid distributions. The lemma below will allow us to approximate the derivatives of the hybrids by the derivative of $\tilde{f}$ at $\nu$.
\begin{lem}\label{lem:coord}
Let $\nu'\geq \nu \in [+1,-1]^n$  differ only on the $j$-th coordinate, i.e., $\nu_\ell=\nu'_\ell$ for all $\ell\neq j$ and $\nu'_{j} > \nu_{j}$.  Then for any $i\geq 2$ and $j<i$, 
$$\drv{\tilde{f}(\nu')}{\nu'_i}  - \drv{\tilde{f}(\nu)}{\nu_{i}} = (\nu'_j-\nu_j) \frac{\hat{f}(\{i,j\};\nu)}{\sigma_i \sigma_j}.$$
In particular, for the hybrid distributions above with $i\geq 2$: 
$$
\drv{\tilde{f}(w^{i-1})}{w^{i-1}_i} - \drv{\tilde{f}(\nu)}{\nu_i}  = 
\frac{1}{\E_\nu[f]\sigma_i} \sum_{j=1}^{i-1} \hat{f}(j;\nu) \hat{f}(\{i,j\};\nu).$$
\end{lem}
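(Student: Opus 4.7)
The plan is to exploit the multilinearity of $\tilde f$ to obtain both parts of the lemma from a single two-coordinate identity. Since $\tilde f(\mu)=\sum_{S}\hat f(S)\prod_{\ell\in S}\mu_{\ell}$, the first partial $\drv{\tilde f}{\mu_i}$ is multilinear in $(\mu_\ell)_{\ell\neq i}$, and the mixed second partial $\frac{\partial^{2}\tilde f}{\partial\mu_{i}\partial\mu_{j}}$ is multilinear in $(\mu_\ell)_{\ell\notin\{i,j\}}$; in particular, it has no dependence on $\mu_{i}$ or $\mu_{j}$ themselves. The first step I would carry out is establishing the two-coordinate analogue of the Russo--Margulis identity \eqref{eqn:russo},
\[\frac{\partial^{2}\tilde f}{\partial\mu_{i}\partial\mu_{j}}(\nu)=\frac{\hat f(\{i,j\};\nu)}{\sigma_{i}(\nu)\sigma_{j}(\nu)},\]
which falls out by expanding $\hat f(\{i,j\};\nu)=\E_{\nu}[f\cdot(x_{i}-\nu_{i})(x_{j}-\nu_{j})]/(\sigma_{i}\sigma_{j})$ in the standard Fourier basis $f=\sum_{S}\hat f(S)\prod_{k\in S}x_{k}$: by independence of the bits under $\nu$, only monomials indexed by $S\supseteq\{i,j\}$ survive, and each such term contributes $\sigma_{i}^{2}\sigma_{j}^{2}\prod_{k\in S\setminus\{i,j\}}\nu_{k}$.

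The first assertion of the lemma then drops out immediately. Since $\drv{\tilde f}{\mu_{i}}$ is linear in $\mu_{j}$ with slope $\frac{\partial^{2}\tilde f}{\partial\mu_{i}\partial\mu_{j}}$, a quantity that does not itself depend on $\mu_{j}$, a one-variable step of the fundamental theorem gives
\[\drv{\tilde f(\nu')}{\nu'_{i}}-\drv{\tilde f(\nu)}{\nu_{i}}=(\nu'_{j}-\nu_{j})\,\frac{\partial^{2}\tilde f}{\partial\mu_{i}\partial\mu_{j}}(\nu)=(\nu'_{j}-\nu_{j})\,\frac{\hat f(\{i,j\};\nu)}{\sigma_{i}\sigma_{j}}.\]

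For the ``in particular'' hybrid clause I would telescope the difference along the path $\nu=w^{0}\to w^{1}\to\cdots\to w^{i-1}$ and apply the first assertion to each single-coordinate increment $w^{j-1}\to w^{j}$. Using $\mu'_{j}-\nu_{j}=\sigma_{j}(\nu)\hat f(j;\nu)/\tilde f(\nu)$ from \eqref{eqn:prob-step}, and noting that $\sigma_{i}(w^{j-1})=\sigma_{i}(\nu)$ and $\sigma_{j}(w^{j-1})=\sigma_{j}(\nu)$ because coordinates $i$ and $j$ are still untouched at step $j$, one arrives at
\[\drv{\tilde f(w^{i-1})}{w^{i-1}_{i}}-\drv{\tilde f(\nu)}{\nu_{i}}=\frac{1}{\tilde f(\nu)\,\sigma_{i}}\sum_{j=1}^{i-1}\hat f(j;\nu)\,\hat f(\{i,j\};w^{j-1}).\]

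The main obstacle is the last move: replacing $\hat f(\{i,j\};w^{j-1})$ by $\hat f(\{i,j\};\nu)$ to match the lemma's stated expression. The cross partial $\hat f(\{i,j\};\cdot)/(\sigma_{i}\sigma_{j})$ really does depend on the previously updated coordinates $w^{j-1}_{\ell}=\mu'_{\ell}$ for $\ell<j$, so the two evaluations differ by terms involving products $(\mu'_{\ell}-\nu_{\ell})$, each of which is quadratic or higher order in the small increments (which scale as $\sigma\hat f(\ell;\nu)/\tilde f(\nu)$). I expect to finish by expanding this discrepancy multilinearly and showing that, in the weak-selection regime $\epsilon<1/n$, its contribution is absorbed either into the one-sided slack of the subsequent use of this formula in Lemma \ref{lem:density}, or into an explicit bound on the higher-order biased Fourier mass. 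This is the delicate bookkeeping step where I expect most of the real work to lie.
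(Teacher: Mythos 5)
Your proof of the first assertion is correct and is essentially identical to the paper's: both reduce the one-coordinate increment to the second cross-partial $\partial^2\tilde f/\partial\mu_i\partial\mu_j$, identified with $\hat f(\{i,j\};\nu)/(\sigma_i\sigma_j)$, and the key point that the cross-partial is a multilinear function of $(\mu_\ell)_{\ell\notin\{i,j\}}$ is exactly the point the paper makes (phrased via $\E_\nu[D_i^{(1/2)}D_j^{(1/2)}f]$).

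For the second assertion, however, you have correctly identified a gap that the paper's own proof does not address. Telescoping along $w^0\to w^1\to\cdots\to w^{i-1}$ and applying the first assertion to the step $w^{j-1}\to w^j$ gives each summand the factor $\hat f(\{i,j\};w^{j-1})/(\sigma_i\sigma_j)$, because the cross-partial depends on the already-updated coordinates $w^{j-1}_\ell=\mu'_\ell$ for $\ell<j$. The paper simply writes $\hat f(\{i,j\};\nu)$ in place of $\hat f(\{i,j\};w^{j-1})$ with no justification, and that substitution is not exact. A direct check confirms this: take $f=1+\tfrac{\epsilon}{2}(1+x_1x_2x_3)$ and $i=3$. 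Then
\[
\drv{\tilde f(w^2)}{w^2_3}-\drv{\tilde f(\nu)}{\nu_3}=\tfrac{\epsilon}{2}\bigl(\mu'_1\mu'_2-\nu_1\nu_2\bigr)
=\frac{\epsilon^2\nu_3}{4\,\E_\nu[f]}\bigl(\sigma_1^2\nu_2^2+\sigma_2^2\nu_1^2\bigr)+\frac{\epsilon^3\sigma_1^2\sigma_2^2\nu_1\nu_2\nu_3^2}{8\,\E_\nu[f]^2},
\]
while the lemma's stated right-hand side equals only the first term. So the second assertion of Lemma~\ref{lem:coord} as written is not an exact equality; it holds up to an error that is higher order in $\epsilon$ (this cubic term is precisely the discrepancy between $\hat f(\{1,3\};w^1)$ and $\hat f(\{1,3\};\nu)$).

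You therefore did not prove the lemma as stated because it cannot be proved as stated; your instinct to keep $\hat f(\{i,j\};w^{j-1})$ and then bound the replacement error is the correct repair. To make it precise one would expand, for each $\ell<j$,
\[
\hat f(\{i,j\};w^{j})-\hat f(\{i,j\};w^{j-1})
=(\mu'_j-\nu_j)\,\sigma_i\sigma_j\,\E_{w^{j-1}}\bigl[D^{(1/2)}_iD^{(1/2)}_jD^{(1/2)}_\ell f\bigr],
\]
note that $|D^{(1/2)}_iD^{(1/2)}_jD^{(1/2)}_\ell f|\le\epsilon/8$ pointwise and $|\mu'_j-\nu_j|\le\epsilon/2$, and show that the accumulated error over all $j<i$ and all $i$ is dominated by the $O(n\epsilon)\sum_i\hat f(i;\nu)^2$ slack already present in Lemma~\ref{lem:density}. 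This is bookkeeping you flagged but did not carry out, and it is also the step the paper omits; the downstream Lemma~\ref{lem:density} should really be proved from the corrected version of the identity.
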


\begin{proof}
Expanding the derivatives in terms of the unbiased Fourier coefficients, we have that 
\begin{align*}
\drv{\tilde{f}(\nu')}{\nu'_i} - \drv{\tilde{f}(\nu)}{\nu_i} & =   
\sum_{S\ni i} \hat{f}(S) \left( \prod_{\ell \in S\setminus i} \nu'_\ell -\prod_{\ell \in S\setminus i} \nu_\ell \right) \\
& =  \sum_{S\ni \{j,i\}} \hat{f}(S) (\nu'_j- \nu_j) \prod_{\ell \in S\setminus\{j,i\}} \nu_\ell \\ 
& =  (\nu'_j -\nu_j) E_\nu [ D^{(1/2)}_i D^{(1/2)}_j f].\\
\end{align*}
The proof of the first equality in the lemma is completed by noting that:
$$\E_\nu[ D^{(1/2)}_i D^{(1/2)}_j f]  =  \frac{1}{\sigma_i\sigma_j}\E_\nu [ D_i^{(\nu)}D_j^{(\nu)} f].$$
For the second equality of the lemma, 
we may write the difference between the derivative in the $i$'-th direction under the hybrid distribution $w^{i-1}$ and under the original distribution as a telescoping sum: 
\begin{align*}
\drv{\tilde{f}(w^{i-1})}{w^{i-1}_i} - \drv{\tilde{f}(\nu)}{\nu_i}  & = 
\sum_{j=1}^{i-1} (w^{j}_j-w^{j-1}_j) \frac{\hat{f}(\{i,j\});\nu)}{\sigma_i\sigma_j} \\ 
& =  \sum_{j=1}^{i-1} \frac{\sigma_j \hat{f}(j;\nu)}{\E_\nu[f]} \frac{\hat{f}(\{i,j\};\nu)}{\sigma_i\sigma_j} \\
& =  \frac{1}{\E_\nu[f]\sigma_i} \sum_{j=1}^{i-1} \hat{f}(j;\nu)\hat{f}(\{i,j\};\nu).
\qedhere
\end{align*}
\end{proof}
 We are now ready to prove Lemma \ref{lem:density}: 
\begin{lem*}[\ref{lem:density}]
Let $\mu'$ be the expectations of the process after selection from the population $\nu$.   Then:
$$\tilde{f}(\mu') - \tilde{f}(\nu) \geq (1-n\epsilon) \sum_{i=1}^n \hat{f}(i;\nu)^2.$$
\end{lem*}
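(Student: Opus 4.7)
The plan is to evaluate the telescoping decomposition $\tilde{f}(\mu')-\tilde{f}(\nu) = \sum_{i=1}^n [\tilde{f}(w^i)-\tilde{f}(w^{i-1})]$ already set up above, term by term. Since $\tilde{f}$ is multilinear and $w^{i-1}, w^{i}$ differ only in coordinate $i$, each increment equals $(\mu'_i-\nu_i)$ times the partial derivative of $\tilde{f}$ in direction $i$ evaluated at $w^{i-1}$. Substituting $\mu'_i-\nu_i = \sigma_i(\nu)\hat{f}(i;\nu)/\E_\nu[f]$ from Equation~(\ref{eqn:prob-step}) and the derivative expansion supplied by Lemma~\ref{lem:coord}, I expect the $i$-th increment to collapse to $\hat{f}(i;\nu)^2/\E_\nu[f]$ plus a cross term of shape $\hat{f}(i;\nu)\cdot \E_\nu[f]^{-2} \sum_{j<i}\hat{f}(j;\nu)\hat{f}(\{i,j\};\nu)$ (the $\sigma_i$ in $\mu'_i-\nu_i$ cancels against the $\sigma_i^{-1}$ appearing in the derivative formula).

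Summing over $i$, the quadratic pieces combine into $\E_\nu[f]^{-1}\sum_i\hat{f}(i;\nu)^2$, which is already what the lemma asks for up to the $\E_\nu[f]^{-1}$ factor; since $\E_\nu[f]\in[1,1+\epsilon]$, this factor is at least $1-\epsilon$. The content of the lemma therefore reduces to arguing that the cross-term sum is smaller than the main quadratic by a factor of $O(n\epsilon)$.

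The key observation for the cross terms is that $f$ takes only two values differing by $\epsilon$, so writing $f = 1 + \epsilon g$ with $g:\fbitset^n\to\{0,1\}$ gives $\hat{f}(S;\nu) = \epsilon\hat{g}(S;\nu)$ for all nonempty $S$, and hence $|\hat{f}(S;\nu)|\leq \epsilon \|g\|_2 \leq \epsilon$ by Parseval. Pulling this uniform bound out of $|\hat{f}(\{i,j\};\nu)|$, and applying Cauchy--Schwarz in the form $\sum_{i\neq j}|\hat{f}(i;\nu)||\hat{f}(j;\nu)| \leq (n-1)\sum_i \hat{f}(i;\nu)^2$, I expect the total magnitude of the cross-term contribution to be at most $(\epsilon n/2)\cdot \E_\nu[f]^{-2}\sum_i\hat{f}(i;\nu)^2$. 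Combining with the leading term and the bounds on $\E_\nu[f]$ yields
\[
\tilde{f}(\mu')-\tilde{f}(\nu) \;\geq\; \left(\tfrac{1}{\E_\nu[f]} - \tfrac{\epsilon n/2}{\E_\nu[f]^2}\right)\sum_i\hat{f}(i;\nu)^2 \;\geq\; (1-\epsilon-\epsilon n/2)\sum_i\hat{f}(i;\nu)^2 \;\geq\; (1-n\epsilon)\sum_i\hat{f}(i;\nu)^2
\]
for $n\geq 2$, which is the claim.

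The main obstacle is the bookkeeping of the cross terms after expanding Lemma~\ref{lem:coord}; in particular, it is essential to recognize that their smallness is a consequence of the narrow $\{1,1+\epsilon\}$ range of $f$, which forces every nonempty $\hat{f}(S;\nu)$ to be at most $\epsilon$ in absolute value. Without that uniform bound, the cross sum could easily swamp the quadratic main term, and the $(1-n\epsilon)$ constant would not materialize. Everything else in the argument is multilinear algebra already packaged by Equation~(\ref{eqn:prob-step}) and Lemma~\ref{lem:coord}.
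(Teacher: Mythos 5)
Your proposal is correct and follows the same skeleton as the paper's proof: telescope over hybrid distributions $w^0=\nu,\dots,w^n=\mu'$, use multilinearity and Lemma~\ref{lem:coord} to write each increment as $\hat{f}(i;\nu)^2/\E_\nu[f]$ plus a cross term, and then show the accumulated cross terms are $O(n\epsilon)$ times the main quadratic sum. The only substantive difference lies in how the cross terms are controlled. The paper bounds $|\hat{f}(\{i,j\};\nu)|$ via $\hat{f}(\{i,j\};\nu)=\sigma_i\sigma_j\E_\nu[D^{(1/2)}_iD^{(1/2)}_jf]$ together with a pointwise bound on the double difference operator (thereby keeping $\sigma_i\sigma_j$ factors), and then reorders coordinates so that $\sigma_i|\hat{f}(i;\nu)|$ is nondecreasing, reducing $\sum_{j<i}\sigma_j|\hat{f}(j;\nu)|\leq (i-1)\sigma_i|\hat{f}(i;\nu)|$. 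You instead pull out a uniform $|\hat{f}(\{i,j\};\nu)|\leq\epsilon$ from Parseval applied to $g=(f-1)/\epsilon$ and control $\sum_{i\ne j}|\hat f(i;\nu)||\hat f(j;\nu)|$ by Cauchy--Schwarz. Both bounds yield $(1-n\epsilon)$ and both rely on exactly the same structural observation — that the narrow range of $f$ forces every nonempty Fourier coefficient to be $O(\epsilon)$ — so this is a cosmetic rather than conceptual departure; if anything your Cauchy--Schwarz packaging is a bit cleaner than the paper's WLOG reordering, while the paper's retention of $\sigma_i\sigma_j$ factors would matter only if one wanted sharper behavior near the boundary of the cube.
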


\begin{proof}
We will bound each of the differences between the hybrid densities in the summation:
$$ \tilde{f}(\mu') - \tilde{f}(\nu) = \sum_{i=1}^n  \tilde{f}(w^{i}) - \tilde{f}(w^{i-1}) .$$
Since $\tilde{f}$ is multilinear,  we have that for $i\geq 2:$ 
\begin{align*}
 \tilde{f}(w^{i})-\tilde{f}(w^{i-1}) & =  (\mu'_i - \nu_i )  \drv{\tilde{f}(w^{i-1})}{w_i^{i-1}}\\
 & =  \frac{\sigma_i\hat{f}(i;\nu)}{\E_\nu[f]} 
 \left(\drv{\tilde{f}(\nu)}{\nu_i}+\drv{\tilde{f}(w^{i-1})}{w_i^{i-1}} - \drv{\tilde{f}(\nu)}{\nu_i}\right) \\
 & =  \frac{\sigma_i\hat{f}(i;\nu)}{\E_\nu[f]} 
 \left(\frac{\hat{f}(i;\nu)}{\sigma_i}+ 
 	\frac{1}{\E_\nu[f]\sigma_i}\sum_{j=1}^{i-1} \hat{f}(j;\nu)\hat{f}(\{i,j\};\nu) \right).
\end{align*}
Recalling that $\hat{f}(\{i,j\};\nu) = \sigma_i \sigma_j \E_\nu[ D^{(1/2)}_i D^{(1/2)}_j f]$ and using the fact that $| D^{(1/2)}_i D^{(1/2)}_j f(x)| \leq \epsilon/4$
for any $x$ and $\E_\nu[f]\geq 1$, we have:  
\begin{equation} \label{eqn:err} 
\frac{1}{\E_\nu[f]\sigma_i}\left|  \sum_{j=1}^{i-1} \hat{f}(j;\nu) \hat{f}(\{i,j\},\nu)  \right| \leq 
\epsilon \sum_{j=1}^{i-1} \sigma_j |\hat{f}(j;\nu)|. \end{equation}
Assume WLOG that $\sigma_i |\hat{f}(i;\nu)| \geq \sigma_j |\hat{f}(j;\nu)|$ for all $j<i$. so that \eqref{eqn:err} is at most 
$\frac{\epsilon}{4} (i-1) \sigma_i |\hat{f}(i;\nu)|$.
Substituting into the telescoping sum, we have: 
\begin{align*}
\sum_{i=1}^n \tilde{f}(w^i) - \tilde{f}(w^{i-1}) & \geq \frac{1}{\E_\nu[f]} \sum_{i=1}^n \sigma_i\hat{f}(i;\nu) \left(\frac{\hat{f}(i;\nu)}{\sigma_i} - \frac{\epsilon}{4} (i-1) \sigma_i |\hat{f}(i;\nu)|\right)  \\ 
& \geq \frac{1}{1+\epsilon}\sum_{i=1}^n \hat{f}(i;\nu)^2 (1-\frac{\epsilon}{4}(i-1)\sigma_i^2 )   \\ 
& \geq  (1-n\epsilon) \sum_{i=1}^n \hat{f}(i;\nu)^2.
\qedhere
\end{align*}
\end{proof}

\subsection{Proof of Lemma \ref{lem:variance-noise}}\label{subsec:variance-noise}
Our goal is to show that: 
$$ \E_{\nu\sim B} \left[ \sum_{i=1}^n \hat{f}(i;\nu)^2 \right]  \geq (N-1) \cdot \E_{\nu\sim B} \left[ (\tilde{f}(\mu) - \tilde{f}(\nu))^2 \right].$$ 
The first key observation is that the Fourier basis with respect to the product distribution $\mu$ is still orthogonal with respect to $B$, i.e., 
we have $\E_{\nu\sim B} [ \phi^\mu_S(\nu) \phi^\mu_T(\nu)]=0$ for $S\neq T$, 
because $B$ is a product distribution and $\E_{\nu \sim B} [ \phi^\mu_S (\nu)]=0$ for $S\neq \emptyset$. 
In particular, Parseval's holds for the extension of $f$ with respect to $B$: 
\begin{claim}\label{eqn:parsevals} 
$$\E_{\nu\sim B}[\tilde{f}(\nu)^2] =  \E_{B} \left[ \left(  \sum_{S} \hat{f}(S;\mu) \phi^\mu_{S} (\nu) \right)^2 \right] = \sum_{S} \hat{f}(S;\mu)^2 \E_{\nu\sim B}[ \phi_{S} ^\mu(\nu)^2 ].$$
\end{claim}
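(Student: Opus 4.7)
The plan is to verify the two equalities of the claim in turn. The first equality is essentially a change of representation of $\tilde{f}$, and the second is the ``Parseval-type'' orthogonality computation that the paragraph preceding the claim already sketches informally.

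For the first equality, I would observe that $\tilde{f}:[-1,1]^n\to\RR$ is a multilinear polynomial by construction, and that the $\mu$-biased basis function $\phi^\mu_S(x)=\prod_{i\in S}(x_i-\mu_i)/\sigma_i$ is also multilinear in $x$. Since the restrictions of both sides of the identity
\[
\tilde{f}(x) \;=\; \sum_{S\subseteq[n]}\hat{f}(S;\mu)\,\phi^\mu_S(x)
\]
to $\fbitset^n$ agree (this is precisely the definition of the $\mu$-biased Fourier decomposition of $f$ recalled in Section~\ref{sec:monotone}), and since two multilinear polynomials that agree on $\fbitset^n$ are identical as polynomials in $x_1,\dots,x_n$, the identity holds for every $x\in[-1,1]^n$, in particular for $x=\nu$. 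Squaring and taking expectation under $B$ yields the first equality and produces the expanded double sum $\sum_{S,T}\hat{f}(S;\mu)\hat{f}(T;\mu)\,\E_B[\phi^\mu_S(\nu)\phi^\mu_T(\nu)]$.

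For the second equality, I would use that $\nu\sim B(\mu)$ is a product distribution: the coordinates $\nu_i=\frac{1}{N}\sum_j y_i^{(j)}$ are independent across $i$, each with mean $\mu_i$. For $S\neq T$, pick any $i\in S\triangle T$; the product $\phi^\mu_S(\nu)\phi^\mu_T(\nu)$ then contains the factor $(\nu_i-\mu_i)/\sigma_i$ to an odd power (namely one), while the remaining factors depend only on $\nu_j$ for $j\neq i$. Independence lets me factor the expectation, and
\[
\E_B\!\left[\tfrac{\nu_i-\mu_i}{\sigma_i}\right]=\tfrac{\E[\nu_i]-\mu_i}{\sigma_i}=0,
\]
killing the whole cross term. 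Only the diagonal terms $S=T$ survive, giving exactly $\sum_S \hat{f}(S;\mu)^2\,\E_B[\phi^\mu_S(\nu)^2]$.

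There is no real obstacle here: the only thing to be careful about is the legitimacy of passing from the Boolean-cube identity to an identity on $[-1,1]^n$, which is the standard ``multilinear extension is unique'' observation. Everything else is orthogonality of the biased characters under any product measure whose marginals match $\mu$ in mean, which is exactly what $B$ provides.
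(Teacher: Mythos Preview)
Your proposal is correct and follows exactly the paper's approach: the paper's entire justification is the sentence immediately preceding the claim, noting that the $\phi^\mu_S$ remain orthogonal under $B$ because $B$ is a product distribution with the same marginals as $\mu$. You have simply spelled out the two steps (multilinear-extension uniqueness for the first equality, and the coordinate-wise independence/mean-zero argument for the cross terms) that the paper leaves implicit.
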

Our approach will be to consider both sides of the inequality using the $\mu$-biased Fourier basis of $f$.  This is straightforward for the variance of $\tilde{f}(\nu)$ using Parseval's.  For the right hand side,
we observe that the $\nu$-biased linear coefficients may be viewed as functions in $\nu$.  In fact, each linear coefficient can be viewed as the extension of the $\mu$-biased difference operator to $[-1,1]^n$, modulo  
a normalization factor.  
\begin{claim} \label{eqn:lin-mu}
$$\hat{f}(i;\nu) = \frac{\sigma_i(\nu)}{\sigma_i(\mu)} \sum_{S\ni i} \hat{f}(S;\mu)\phi^\mu_{S\setminus i}(\nu).$$
\end{claim}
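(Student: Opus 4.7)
The plan is to express the $\nu$-biased linear coefficient on the left in terms of a simple discrete derivative, expand that derivative in the $\mu$-biased Fourier basis, and then exploit the product structure of $\nu$ to evaluate the resulting multilinear polynomial directly at the point $\nu$. The whole argument is a chain of identities; there is no real obstacle, only a careful bookkeeping of which biased basis each quantity lives in.

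First I would use equation (\ref{eqn:diff}) with respect to the $\nu$-biased basis, which gives $\hat{f}(i;\nu) = \E_\nu[D_i^{(\nu)} f]$, and then rewrite $D_i^{(\nu)} f = \sigma_i(\nu)\cdot D_i^{(1/2)} f$, where $D_i^{(1/2)}f(x) = (f_{i=1}(x) - f_{i=-1}(x))/2$ is the unbiased discrete derivative. Crucially, $D_i^{(1/2)}f$ does not depend on $x_i$. Thus
$$\hat{f}(i;\nu) \;=\; \sigma_i(\nu)\cdot \E_\nu\!\left[D_i^{(1/2)} f\right].$$

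Next I would expand $f$ in its $\mu$-biased Fourier decomposition, $f = \sum_{S}\hat{f}(S;\mu)\,\phi^\mu_S$, and apply $D_i^{(1/2)}$ termwise. Every $\phi^\mu_S$ with $i\notin S$ is independent of $x_i$ and is killed by $D_i^{(1/2)}$. For $S\ni i$, the factor $(x_i-\mu_i)/\sigma_i(\mu)$ inside $\phi^\mu_S$ contributes exactly $1/\sigma_i(\mu)$ under $D_i^{(1/2)}$, leaving $\phi^\mu_{S\setminus i}$ intact. Hence
$$D_i^{(1/2)} f \;=\; \frac{1}{\sigma_i(\mu)} \sum_{S\ni i} \hat{f}(S;\mu)\,\phi^\mu_{S\setminus i}.$$

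The last step is to take the expectation under $\nu$. Here I would use the product-distribution identity: for any multilinear function $g$ on $\fbitset^n$ and any product distribution $\nu$ on $\fbitset^n$, $\E_\nu[g(x)] = g(\nu)$, where the right side is understood via the multilinear extension. Since each $\phi^\mu_{S\setminus i}$ is manifestly multilinear in $x$, this gives $\E_\nu[\phi^\mu_{S\setminus i}(x)] = \phi^\mu_{S\setminus i}(\nu) = \prod_{j\in S\setminus i}(\nu_j-\mu_j)/\sigma_j(\mu)$. Substituting into the two displays above yields
$$\hat{f}(i;\nu) \;=\; \frac{\sigma_i(\nu)}{\sigma_i(\mu)}\sum_{S\ni i} \hat{f}(S;\mu)\,\phi^\mu_{S\setminus i}(\nu),$$
which is the claim. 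The only point that requires any care is keeping $\mu$ and $\nu$ straight in the biased basis elements, since $\phi^\mu_{S\setminus i}$ is normalized by $\sigma_j(\mu)$ even though the expectation is taken under $\nu$; this asymmetry is precisely what produces the prefactor $\sigma_i(\nu)/\sigma_i(\mu)$.
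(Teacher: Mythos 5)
Your proof is correct and follows essentially the same chain of identities as the paper's: rewrite $\hat f(i;\nu)$ via the difference operator, convert between the $\nu$-bias and $\mu$-bias normalizations, expand in the $\mu$-biased basis (your explicit computation of $D_i^{(1/2)}f$ is exactly what the paper invokes by citing the definition of $D_i^{(\mu)}$ and equation~(\ref{eqn:diff})), and finish by using the product structure of $\nu$ to pass the expectation inside.
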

\begin{proof}
We rewrite the linear $\nu$-biased Fourier coefficient in terms of the $\mu$-biased difference operator: 
\begin{align*}
 \hat{f}(i;\nu) = \E_{x\sim \nu} [ D_i^{(\nu)} f  ]  &=  \sigma_i(\nu) \E_\nu [  D_i^{(1/2)} f] \\
 &= \frac{\sigma_i(\nu)}{\sigma_i(\mu)}\E_\nu \left[  D_i^{(\mu)}f \right] \\ 
 &= \frac{\sigma_i(\nu)}{\sigma_i(\mu)} \sum_{S\ni i} \hat{f}(S;\mu)   \E_\nu[ \phi^\mu_{S\setminus i}]\\
 &= \frac{\sigma_i(\nu)}{\sigma_i(\mu)} \sum_{S\ni i} \hat{f}(S;\mu)\phi^\mu_{S\setminus i}(\nu), 
\end{align*}
where the penultimate equality holds by the definition of $D^{(\mu)}_i$, and the final holds because $\nu$ is a product distribution. 
\end{proof}

Finally, we will use the fact that the variance of $\phi^\mu_i(\nu)$ grows smaller as the sample size increases:    
\begin{fact}
$$\E_{\nu\sim B}[  \phi^\mu_{S\setminus i}(\nu)^2 ]  = \E_{\nu\sim B}[ \phi^\mu_{S}(\nu)^2 ]\cdot N.$$
\label{eqn:lin-var}
\end{fact}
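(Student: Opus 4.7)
The plan is to prove Fact \ref{eqn:lin-var} by a direct second-moment computation, using the fact that the empirical coordinate averages $\nu_1,\dots,\nu_n$ are mutually independent under $B(\mu)$. Recall that $\nu_i = \frac{1}{N}\sum_{j=1}^{N} y_i^{(j)}$, where $y^{(1)},\dots,y^{(N)}$ are drawn i.i.d.\ from the product distribution $\mu$. Since $\mu$ is itself a product, the coordinates of each single draw are independent, and since the $N$ draws are themselves independent, the whole family of $nN$ underlying bits $\{y_i^{(j)}\}_{i,j}$ is mutually independent. In particular, $\nu_i$ and $\nu_{i'}$ depend on disjoint blocks of this family and are therefore independent.

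The first step is to expand
\[
\phi^\mu_S(\nu) \;=\; \prod_{i\in S} \frac{\nu_i - \mu_i}{\sigma_i(\mu)},
\]
and use the independence of the $\nu_i$'s to factor the second moment:
\[
\E_{\nu\sim B}\bigl[\phi^\mu_S(\nu)^2\bigr] \;=\; \prod_{i\in S} \E_{\nu\sim B}\!\left[\left(\frac{\nu_i-\mu_i}{\sigma_i(\mu)}\right)^{\!2}\right].
\]
The second step is a one-coordinate variance calculation: each $y_i^{(j)}\in\{\pm 1\}$ has mean $\mu_i$ and variance $\sigma_i(\mu)^2 = 1-\mu_i^2$, and the $N$ copies are i.i.d., so $\E[(\nu_i-\mu_i)^2] = \sigma_i(\mu)^2/N$ and every factor in the product equals $1/N$. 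Therefore $\E_{\nu\sim B}[\phi^\mu_S(\nu)^2] = N^{-|S|}$. Removing an index $i\in S$ drops $|S|$ by one, which multiplies this quantity by exactly $N$, giving the claimed identity.

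There is no real obstacle here: the calculation is essentially just the variance of a sample mean. The one point worth emphasising is that both product structures are indispensable---$\mu$ being a product distribution (so that the coordinates of each individual $y^{(j)}$ are independent), and the $N$ draws being i.i.d.\ (so that $\nu_i$ and $\nu_{i'}$ depend on disjoint bits). If either assumption were relaxed, cross-terms would appear in the expansion of $\phi^\mu_S(\nu)^2$ and the clean $1/N$-per-coordinate factorization would break down.
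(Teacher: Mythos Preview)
Your proof is correct and follows essentially the same approach as the paper's: both use the product structure of $B$ to factor $\E_{\nu\sim B}[\phi^\mu_S(\nu)^2]$ across coordinates and then compute the single-coordinate factor $\E_{\nu\sim B}[\phi^\mu_i(\nu)^2]=1/N$ via the variance of the sample mean. You go slightly further by noting the closed form $\E_{\nu\sim B}[\phi^\mu_S(\nu)^2]=N^{-|S|}$, whereas the paper only isolates the single factor $1/N$ needed for the ratio, but this is the same argument.
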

\begin{proof}
Because $B$ is a product distribution, we have:
$$\E_{\nu\sim B} [ \phi_{S}^{\mu}(\nu)^2 ] = \E_{\nu\sim B}[\phi_{S\setminus i}^{\mu}(\nu)^2]\cdot \E_{\nu\sim B}[\phi_i^\mu(\nu)^2].$$
Then 
\begin{align*}
\E_{\nu\sim B}[ \phi_i^\mu(\nu)^2 ] & = \E_{\nu\sim B} [ (\nu_i -\mu_i)^2/\sigma^2_i(\mu)]\\
& = \frac{ \E_{\nu\sim B}[(\nu_i-\mu_i)^2]}{1-\mu_i^2}\\
& = \frac{\E_B[ \nu_i^2] -\mu_i^2}{1-\mu_i^2}\\
& = \frac{N^{-1} +\mu_i^2- N^{-1}\mu_i^2 -\mu_i^2}{1-\mu_i^2}.
\qedhere
\end{align*}
\end{proof}

\medskip 

\noindent With the previous claims 
in hand, we are ready to prove Lemma \ref{lem:variance-noise}: 
\begin{lem*}[\ref{lem:variance-noise}]
$$\E_{\nu\sim B} \left[ (\tilde{f}(\mu) - \tilde{f}(\nu))^2 \right] \leq
 \frac{1}{N-1} \E_{\nu\sim B} \left[ \sum_{i=1}^n \hat{f}(i;\nu)^2 \right].$$ 
\end{lem*}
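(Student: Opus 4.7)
The plan is to work out both sides of the inequality in the $\mu$-biased Fourier basis, exploit the orthogonality of $\{\phi^\mu_S(\nu)\}_S$ with respect to $B$ (which the paper has already observed), and then compare the two sides term by term.

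First, since $\tilde f$ is multilinear and $\E_{\nu\sim B}[\nu_i]=\mu_i$, we have $\E_{\nu\sim B}[\tilde f(\nu)]=\tilde f(\mu)$, so the LHS equals $\operatorname{Var}_{\nu\sim B}[\tilde f(\nu)]$. Expanding $\tilde f(\nu)=\sum_S \hat f(S;\mu)\phi^\mu_S(\nu)$ and applying Claim~\ref{eqn:parsevals}, the LHS becomes
\[
\sum_{S\neq\emptyset} \hat f(S;\mu)^2\,\E_{\nu\sim B}[\phi^\mu_S(\nu)^2].
\]

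For the RHS, I would plug Claim~\ref{eqn:lin-mu} into $\hat f(i;\nu)^2$ and take expectations. The important structural observation is that $\sigma_i^2(\nu)=1-\nu_i^2$ depends only on $\nu_i$, whereas every $\phi^\mu_{S\setminus i}(\nu)$ appearing in the sum depends only on $\nu_j$ for $j\neq i$; since $B$ is a product distribution, these are independent, so the $\sigma_i^2(\nu)/\sigma_i^2(\mu)$ factor separates from the squared sum. A direct calculation gives $\E_{\nu\sim B}[\sigma_i^2(\nu)]=\sigma_i^2(\mu)(N-1)/N$. Inside the squared sum, cross-terms $\phi^\mu_{S\setminus i}\phi^\mu_{T\setminus i}$ for $S\neq T$ vanish in expectation by the same orthogonality argument used in Claim~\ref{eqn:parsevals}, so only the diagonal survives. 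Applying Fact~\ref{eqn:lin-var} to convert $\E_{\nu\sim B}[\phi^\mu_{S\setminus i}(\nu)^2]=N\cdot\E_{\nu\sim B}[\phi^\mu_S(\nu)^2]$ then gives
\[
\E_{\nu\sim B}[\hat f(i;\nu)^2] \;=\; \tfrac{N-1}{N}\sum_{S\ni i}\hat f(S;\mu)^2\cdot N\cdot\E_{\nu\sim B}[\phi^\mu_S(\nu)^2] \;=\;(N-1)\sum_{S\ni i}\hat f(S;\mu)^2\,\E_{\nu\sim B}[\phi^\mu_S(\nu)^2].
\]

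Summing over $i$ and dividing by $N-1$, each set $S$ is counted $|S|$ times, so
\[
\tfrac{1}{N-1}\E_{\nu\sim B}\!\Bigl[\sum_{i=1}^n\hat f(i;\nu)^2\Bigr]
\;=\;\sum_{S\neq\emptyset}|S|\,\hat f(S;\mu)^2\,\E_{\nu\sim B}[\phi^\mu_S(\nu)^2].
\]
Comparing with the LHS term by term and using $|S|\geq 1$ for every nonempty $S$ finishes the proof.

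The only non-mechanical step is the decomposition of $\E_{\nu\sim B}[\hat f(i;\nu)^2]$: one has to notice the $\nu_i$-independence that lets the annoying $\sigma_i^2(\nu)/\sigma_i^2(\mu)$ factor come out cleanly, and then line up Claim~\ref{eqn:lin-mu} with Fact~\ref{eqn:lin-var} so that the $N$ in the fact exactly cancels the $1/N$ from $\E[\sigma_i^2(\nu)]$, leaving the critical factor of $N-1$. Once this is in place, the rest of the inequality is simply the observation that the RHS weights the $|S|=1$ Fourier levels the same as the LHS but weights higher levels more heavily.
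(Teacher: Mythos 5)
Your proof is correct and follows essentially the same route as the paper's: express both sides in the $\mu$-biased Fourier basis via Claim~\ref{eqn:parsevals} and Claim~\ref{eqn:lin-mu}, exploit product-structure independence to peel off the $\sigma_i^2(\nu)/\sigma_i^2(\mu)$ factor (which averages to $(N-1)/N$), apply Fact~\ref{eqn:lin-var}, and finish by noting that each nonempty $S$ is counted at least once in $\sum_i\sum_{S\ni i}$. The only cosmetic difference is that you make the $|S|$-fold multiplicity explicit before invoking $|S|\geq 1$, whereas the paper simply notes each coefficient appears at least once.
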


\begin{proof}
We first consider the expected $\nu$-biased linear Fourier weight. Applying Claims \ref{eqn:lin-mu} and \ref{eqn:parsevals}, and summing over $i$ we have: 
\begin{align*}
\sum_{i=1}^n E_{\nu\sim B}[ \hat{f}(i;\nu)^2 ] &= \sum_{i=1}^n \frac{\E_B[ \sigma_i(\nu)^2]}{\sigma_i(\mu)^2}\sum_{S\ni i} \hat{f}(S;\mu)^2 \E_B[\phi_{S\setminus i}^\mu(\nu)^2] \\
&= \left(1-\frac{1}{N}\right)\sum_{i=1}^n \sum_{S\ni i} \hat{f}(S;\mu)^2 \E_B[ \phi_{S\setminus i}^\mu(\nu)^2 ]  \\
&= \left(1-\frac{1}{N}\right)\sum_{i=1}^n \sum_{S\ni i} \hat{f}(S;\mu)^2 \E_B[ \phi_S^\mu(\nu)^2 ] N \\
&\geq (N-1)\cdot \sum_{S\neq \emptyset} \hat{f}(S;\mu)^2 \E[ \phi_S^\mu(\nu)^2] .
\end{align*}
Note in the first equality that the $\sigma_i(\nu)^2$ depends only on $\nu_i$, while $\phi_{S\setminus i}^\mu$ does not, so the expectations may be taken separately.   
For the next equality, we calculate  
\begin{align*}
\E_B[\sigma_i(\nu)^2] &= 1-E_B[\nu_i^2] \\ 
 & = 1 - \left(\mu_i^2+ \frac{1-\mu_i^2}{N} \right),
\end{align*}
which gives that 
$$\frac{\E_B[ \sigma_i(\nu)^2 ] }{ \sigma_i(\mu)^2} = 1-1/N.$$
The third equality holds by Fact \ref{eqn:lin-var}, and the final inequality holds since each non-empty coefficient appears in the sum at least once.

Using Claim \ref{eqn:parsevals} and rewriting the variance of $\tilde{f}(\nu)$ using the $\mu$-biased Fourier basis for $f$, we have:
\begin{align*}
\E_{\nu\sim B} [ (\tilde{f}(\nu) - \tilde{f}(\mu))^2 ] &= \E_B [ (\sum_{S\neq \emptyset} \hat{f}(S;\mu) \phi_S^\mu(\nu))^2 ] \\
&= \sum_{S\neq \emptyset} \hat{f}(S;\mu)^2 \E_B[ \phi_S^\mu(\nu)^2].  \qedhere
\end{align*}
\end{proof}

\section{Bounding the cumulative effect of sampling} \label{sec:martingale}
We saw in Lemma \ref{lem:density} that the selection step always results in a non-negative change in fitness when $\epsilon$ is sufficiently small.  The sampling steps, however, may decrease fitness.  In this section we show that the cumulative effect of sampling on fitness will be small.  We use $\mu^{t}$ to denote the initial distribution of the generation at time $t$, and $\nu^{t}$ to denote the product distribution after the sampling step. according to $B(\mu^{t})$.  Then the selection step determines the population at time $t+1$ which we write as $\mu^{t+1}$ (determined by $\nu^{t}$).        
By Lemma \ref{lem:noise-fitness} at each stage the variance of $\tilde{f}(\nu)$ is a small fraction of the expected fitness increase after selection.  Summing over all generations, the total variance from the sampling steps is a small fraction of the total fitness increase from the selection steps.  Finally, we bound from above the total fitness decrease due to sampling effects; for this last step we need the following generalization of Bernstein's inequality to martingales with unbounded jumps by Dzhaparidze and Zanten:  
\begin{lem} 
\label{lem:generalized-bernstein}(Theorem 3.3%
\footnote{This is a special case of their Theorem 3.3, which corresponds, in
their notation, to the limit as $a\rightarrow0$.%
}, \cite{bernstein_generalized}) Let $\left\{ {\cal F}_{t}\right\} _{t=0,1,\dots}$
be a filtration, and let $\zeta_{1},\zeta_{2},\dots$ be a martingale
difference sequence w.r.t. $\left\{ {\cal F}_{t}\right\} $. Consider
the martingale $S_{T}=\sum_{t=1}^{T}\zeta_{t}$. Define: 
\begin{align*}
H_{T} = \sum\zeta_t^2 +\sum\mbox{E}\left[\zeta_{t}^{2}\mid F_{t-1}\right]
\end{align*}
Then, for each stopping time $\tau$,\[
\Pr\left[\max_{T\leq\tau}\left|S_{T}\right|>z,H_{\tau}\leq L\right]\leq2\exp\left(\frac{-z^{2}}{2L}\right)\]
\end{lem}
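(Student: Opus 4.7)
The plan is to prove the bound by constructing an exponential supermartingale tailored to the combined variance process $H_T$, and then apply Doob's maximal inequality together with a Chernoff-style optimization over the exponential parameter. The essential twist that lets us drop any boundedness assumption on the jumps $\zeta_t$ is to put the \emph{observed} squared jumps $\sum_{t\leq T} \zeta_t^2$ inside the exponent (rather than only the predictable compensator $\sum_{t\leq T} \mathbb{E}[\zeta_t^2\mid \mathcal{F}_{t-1}]$). This self-normalization makes the per-step moment-generating factor controllable regardless of the tails of $\zeta_t$.

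The first step is to establish the elementary pointwise inequality
\[ e^{u - u^2/2} \leq 1 + u + \tfrac{u^2}{2}, \qquad u \in \mathbb{R}, \]
by a routine one-variable calculus check (equality and matching derivatives at $u=0$, plus verification that the gap is nonnegative; note in passing that the left-hand side is bounded by $e^{1/2}$, so no integrability issues arise later). Given this, for any $\lambda \in \mathbb{R}$ I would define
\[ M_T(\lambda) := \exp\!\bigl(\lambda S_T - \tfrac{\lambda^2}{2} H_T\bigr), \qquad M_0(\lambda) = 1, \]
and show it is a nonnegative $\{\mathcal{F}_T\}$-supermartingale. Factoring $M_T(\lambda) = M_{T-1}(\lambda)\cdot e^{-\frac{\lambda^2}{2}\mathbb{E}[\zeta_T^2\mid \mathcal{F}_{T-1}]}\cdot e^{\lambda \zeta_T - \frac{\lambda^2}{2}\zeta_T^2}$, applying the pointwise inequality with $u = \lambda \zeta_T$ to the last factor, and taking conditional expectation, the martingale property $\mathbb{E}[\zeta_T\mid\mathcal{F}_{T-1}]=0$ kills the linear term, leaving the bound $1 + \tfrac{\lambda^2}{2}\mathbb{E}[\zeta_T^2\mid\mathcal{F}_{T-1}] \leq e^{\frac{\lambda^2}{2}\mathbb{E}[\zeta_T^2\mid\mathcal{F}_{T-1}]}$, which cancels the predictable factor and gives $\mathbb{E}[M_T(\lambda)\mid \mathcal{F}_{T-1}] \leq M_{T-1}(\lambda)$.

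With the supermartingale in hand, fix $z>0$ and $\lambda>0$. On the event $\{H_\tau \leq L\}$, monotonicity of $H_T$ in $T$ forces $H_T \leq L$ for every $T\leq \tau$, so on that event
\[ M_T(\lambda) \geq \exp\!\bigl(\lambda S_T - \tfrac{\lambda^2 L}{2}\bigr). \]
Hence $\{\max_{T\leq \tau} S_T > z\}\cap\{H_\tau \leq L\} \subseteq \{\max_{T\leq \tau} M_T(\lambda) > e^{\lambda z - \lambda^2 L/2}\}$. Doob's maximal inequality applied to the nonnegative supermartingale stopped at $\tau$ yields
\[ \Pr\!\bigl[\max_{T\leq \tau} S_T > z,\ H_\tau \leq L\bigr] \leq e^{-\lambda z + \lambda^2 L/2}, \]
and choosing $\lambda = z/L$ minimizes the right-hand side to $e^{-z^2/(2L)}$. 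Running the same argument with $\lambda<0$ (equivalently, applying the bound to the martingale $-S_T$, which shares the same $H_T$) controls the lower tail, and a union bound delivers the factor of $2$ and the two-sided maximum in the claim.

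The main obstacle is the pointwise inequality that drives the supermartingale property; once this inequality is in place, the remaining argument is the standard Chernoff/Doob packaging. A minor technical point is the invocation of the optional stopping theorem at the general stopping time $\tau$, which is legitimate here because $M_{T\wedge\tau}(\lambda)$ is a nonnegative supermartingale and, as already noted, each conditional moment-generating factor is uniformly bounded.
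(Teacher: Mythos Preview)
The paper does not prove this lemma at all: it is quoted as Theorem~3.3 of Dzhaparidze and van Zanten (the reference \texttt{bernstein\_generalized}), with a footnote indicating that it is the special case of their result obtained by letting the truncation parameter tend to zero. So there is no in-paper proof to compare against.

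Your argument is correct and is exactly the standard self-normalized supermartingale proof of this inequality (the same mechanism that drives the cited result and the closely related inequalities of de la Pe\~na, Klass, and Lai). The crucial step is the pointwise bound $e^{u-u^{2}/2}\le 1+u+u^{2}/2$ for all real $u$: this is what allows you to absorb the \emph{realized} squared jump $\zeta_t^2$ into the exponent without any tail assumption on $\zeta_t$, and it is what distinguishes this inequality from the usual Freedman/Bernstein martingale bounds that require bounded increments. Once that inequality is in place, your factorization shows $M_T(\lambda)=\exp\bigl(\lambda S_T-\tfrac{\lambda^2}{2}H_T\bigr)$ is a nonnegative supermartingale, and the remaining Doob/Chernoff packaging (including the symmetric treatment of $-S_T$ for the two-sided bound and the use of monotonicity of $H_T$ to pass to the event $\{H_\tau\le L\}$) is routine. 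In short, you have supplied a clean, self-contained proof of a result the paper merely imports.
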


Using this machinery, we can show that the total decrease of fitness due to sampling is small:

\begin{lem}\label{lem:main}
Let $\beta = \left(\frac{2\epsilon}{(N-1)(1-n\epsilon)}\right)^{1/2}$ and $\alpha = \sqrt{2\beta \ln \frac{2}{\beta}}$. 
Then 
$$\Pr[ | \sum_{t=1}^T \tilde{f}(\nu^t)   -\tilde{f}(\mu^{t-1}) | \geq \alpha ] \leq 2\beta.$$
\end{lem}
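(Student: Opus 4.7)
The plan is to combine the generalized Bernstein inequality (Lemma \ref{lem:generalized-bernstein}) with the one-step variance control provided by Lemma \ref{lem:noise-fitness}. I first set $\zeta_t := \tilde{f}(\nu^t) - \tilde{f}(\mu^{t-1})$ and $\mathcal{F}_t := \sigma(\nu^1,\dots,\nu^t)$. Because $\tilde{f}$ is multilinear and the coordinates of $\nu^t$ are conditionally independent with means $\mu^{t-1}_i$ given $\mathcal{F}_{t-1}$, we have $\E[\tilde{f}(\nu^t)\mid \mathcal{F}_{t-1}] = \tilde{f}(\mu^{t-1})$, so $(\zeta_t)$ is a martingale difference sequence and $S_T := \sum_{t\leq T}\zeta_t$ is exactly the quantity to be controlled.

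The core step is a ``budget'' analysis of $H_T = \sum_t \zeta_t^2 + \sum_t \E[\zeta_t^2\mid \mathcal{F}_{t-1}]$. Lemma \ref{lem:noise-fitness}, applied conditionally on $\mathcal{F}_{t-1}$ (valid because the generation's dynamics depend only on the $\mathcal{F}_{t-1}$-measurable $\mu^{t-1}$), gives $\E[\zeta_t^2\mid \mathcal{F}_{t-1}] \leq \E[A_t\mid \mathcal{F}_{t-1}] / ((N-1)(1-n\epsilon))$, where $A_t := \tilde{f}(\mu^t) - \tilde{f}(\nu^t)$ is the selection-step gain (nonnegative by Lemma \ref{lem:density} together with $\epsilon < 1/n$). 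Telescoping $\sum_{t\leq T}(A_t + \zeta_t) = \tilde{f}(\mu^T) - \tilde{f}(\mu^0)$ and taking expectations, the martingale property $\E[S_T]=0$ together with $\tilde{f}\in[1,1+\epsilon]$ yields $\E[\sum_t A_t] \leq \epsilon$. Hence $\E[H_T] = 2\E[\sum_t \zeta_t^2] \leq 2\epsilon/((N-1)(1-n\epsilon)) = \beta^2$.

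From here the deduction is largely mechanical. Markov's inequality yields $\Pr[H_T > \beta] \leq \beta$, while Lemma \ref{lem:generalized-bernstein} applied with $\tau=T$, $L=\beta$, $z=\alpha$ gives $\Pr[|S_T| > \alpha,\, H_T \leq \beta] \leq 2\exp(-\alpha^2/(2\beta))$; our choice $\alpha = \sqrt{2\beta \ln(2/\beta)}$ was precisely tuned to make this equal $\beta$. A union bound over the two events $\{H_T > \beta\}$ and $\{|S_T|>\alpha,\, H_T \leq \beta\}$ then gives $\Pr[|S_T|\geq\alpha] \leq 2\beta$, as required.

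The main obstacle is the self-referential flavor of the budget identity in the middle paragraph: we are bounding fluctuations of the very martingale $S_T$ that also appears inside the available upper bound on $\sum_t A_t$. What saves us is that this budget bound is used only in expectation, where the $-\E[S_T]$ term vanishes and only the $O(\epsilon)$ total-range contribution survives; Bernstein then handles the pathwise variance in $H_T$. A secondary subtlety worth checking is that the variance in the second sum of $H_T$ (namely $\sum_t \zeta_t^2$) really does have the same expectation as the sum of conditional second moments, which follows from the tower property and justifies the factor of $2$ in $\E[H_T]$.
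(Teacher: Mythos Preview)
Your proposal is correct and follows essentially the same route as the paper's proof: the same martingale $S_T$, the same $H_T = \sum_t \zeta_t^2 + \sum_t \E[\zeta_t^2\mid\mathcal F_{t-1}]$, the bound $\E[H_T]\le\beta^2$ via Lemma~\ref{lem:noise-fitness} and the telescoping identity $\sum_t(A_t+\zeta_t)=\tilde f(\mu^T)-\tilde f(\mu^0)$, then Markov on $H_T$ combined with Lemma~\ref{lem:generalized-bernstein} at $L=\beta$, $z=\alpha$. The only cosmetic differences are the index shift (your $\mu^{t-1}$ is the paper's $\mu^t$) and that the paper bounds $\E[M_T]$ and $\E[V_T]$ separately rather than invoking the tower property to collapse them into $2\E\big[\sum_t\zeta_t^2\big]$.
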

\begin{proof}
For each sequence $\left(\mu^{t}\right)_{t=0}^{T}$ of
populations up to time $T$, define its congruence class as a subset
of infinite sequences:\[
\left[\left(\mu^{t}\right)_{t=0}^{T}\right]=\left\{ \left(w^{t}\right)_{t=0}^{\infty}\colon w^{t}=\mu^{t}\forall t\leq T\right\} \]
Now, for a time $T$, consider the space of possible sequences of populations: 
\[ \mathcal{F}_{T}=\left\{ \left[\left(\mu^{t}\right)_{t=0}^{T}\right]\right\}.\]
Then  $\mathcal{F}_{0}\subset\mathcal{F}_{1}\subset\dots$ is a filtration.  We will consider the following martingale:\[
S_{T}= \sum_{t=0}^T\zeta_t  = \sum_{t=0}^{T}\tilde{f}\left(\nu^{t}\right)-\tilde{f}\left(\mu^{t}\right)\]
Notice that this is indeed a martingale because\begin{eqnarray*}
\mbox{E}\left[S_{T}\mid\mathcal{F}_{T-1}\right] & = & \mbox{E}\left[\sum_{t=0}^{T}\tilde{f}\left(\nu^{t}\right)-\tilde{f}\left(\mu^{t}\right)\mid\mathcal{F}_{T-1}\right]\\
 & = & S_{T-1}+\mbox{E}\left[\tilde{f}\left(\nu^{T}\right)-\tilde{f}\left(\mu^{T}\right)\mid\mathcal{F}_{T-1}\right]\\
 & = & S_{T-1}.\end{eqnarray*}
To apply Lemma \ref{lem:generalized-bernstein},
we also define the following sequences:\begin{eqnarray*}
M_{T} & = & \sum_{t=0}^{T}\left(\tilde{f}\left(\nu^{t}\right)-\tilde{f}\left(\mu^{t}\right)\right)^{2}\\
V_{T} & = & \sum_{t=0}^{T}\mbox{E}\left[\left(\tilde{f}\left(\nu^{t}\right)-\tilde{f}\left(\mu^{t}\right)\right)^{2}\mid\mathcal{F}_{t-1}\right]\\
H_{T} & = & M_{T}+V_{T}\end{eqnarray*}
For each $T$, we show that $\Pr[H_T \geq \beta] \leq \beta$ by bounding $\E[ H_T]$ and applying Markov's inequality.  Applying Lemma \ref{lem:noise-fitness}, we have that \begin{align*}
\E[ M_T ] &= \E \left[ \sum_{t=0}^T (\tilde{f}(\nu^t) - \tilde{f}(\mu^{t}))^2 \right]  \\ 
& \leq \frac{1}{(N-1)(1-n\epsilon)}\cdot \E \left[ \sum_{t=0}^T \tilde{f}(\mu^{t+1}) - \tilde{f}(\nu^t) \right]  \\ 
& \leq \frac{\epsilon}{(N-1)(1-n\epsilon)},
\end{align*}
where the last inequality holds because
$$
 \E\left[ \sum_{t=0}^T \tilde{f}(\mu^{t+1}) - \tilde{f}(\nu^t) +  \sum_{t=0}^T \tilde{f}(\nu^t) - \tilde{f}(\mu^{t}) \right] = 
\E[ \tilde{f}(\mu^{T+1}) -\tilde{f}(\mu^0)] \leq \epsilon, 
$$
and $\E[S_T]= \E[ \sum_{t=0}^T \tilde{f}(\nu^{t})-\tilde{f}(\mu^{t})]=0.$ 
Similarly we may apply Lemma \ref{lem:noise-fitness} to each term of $V_T$: \begin{align*}
 \E \left[ (\tilde{f}(\nu^t) - \tilde{f}(\mu^{t}))^2 | \mathcal{F}_{t-1} \right] 
&= \E_{\nu^t \sim B(\mu^{t})} \left[ (\tilde{f}(\nu^t) - \tilde{f}(\mu^{t}))^2 \right] \\
& \leq \frac{1}{(N-1)(1-n\epsilon)} \cdot \E_{\nu^t \sim B(\mu^{t})} \left[ \tilde{f}(\mu^{t+1}) - \tilde{f}(\nu^{t}) \right].\\
\end{align*}
Summing over $t$ and taking the expectation, we again have that $\E[V_T] \leq \epsilon/(N-1)(1-n\epsilon)$. Thus, we have that $\E[H_T] = \E[M_T+V_T] \leq \frac{2\epsilon}{(N-1)(1-n\epsilon)} \leq \beta^2$.  
Finally, applying Lemma \ref{lem:generalized-bernstein} to $S_T$, we have 
$$\Pr[ \max_{T \leq \tau} |S_T| \geq \alpha, H_T \leq \beta] \leq 2 \exp \left(-\frac{\alpha^2}{2\beta}\right)
\leq \beta.$$ 
Combining this with the bound $\Pr[H_T \geq \beta]\leq \beta$ gives the lemma.   
\end{proof}

\section{Proof of the main theorem} \label{sec:finalproof}
To complete the proof of Theorem \ref{thm:main-full},
we first show (Lemma \ref{lem:fast-vertex} below) that for sufficiently large $T$, the population $\mu^{T}$ is at a vertex of the Boolean cube with high probability.  Finally, we combine this with Lemma \ref{lem:main}, which bounds the probability that $\tilde{f}(\mu^T)\neq 1+\epsilon$. 
\begin{lem} \label{lem:fast-vertex}
There is a constant $C>0$ such that 
for any $T\geq C\cdot \frac{\epsilon n^8 N^4}{1-n\epsilon}$,  we have:
$$ \Pr[ \mu^{T} \notin \fbitset^n ] < 2/n.$$
\end{lem}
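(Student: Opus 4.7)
The plan is to prove the lemma by showing that for each coordinate $i$, the process $\mu_i^t$ is absorbed at $\pm 1$ within $T$ steps with probability at least $1 - 2/n^2$, and then take a union bound over the $n$ coordinates. The crucial observation is an absorption property: if $\nu_i^t = \pm 1$ (i.e., all $N$ samples for coordinate $i$ happen to agree), then by Equation \eqref{eqn:prob-step} the selection step yields $\mu_i^{t+1} = \pm 1$, and from that point on all future samples and populations at coordinate $i$ equal $\pm 1$. So vertices are absorbing states, and the whole question is how long the process takes to hit one of them in every coordinate.

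First, I would analyze a single coordinate $\mu_i^t$ as a stochastic process in $[-1,1]$. The sampling step introduces a mean-zero increment $\nu_i^t - \mu_i^t$ with conditional variance $\sigma_i^2(\mu^t)/N = (1 - (\mu_i^t)^2)/N$, while the selection step produces a bounded deterministic shift $|\mu_i^{t+1} - \nu_i^t| \leq \sigma_i(\nu^t) \epsilon /(1+\epsilon)$ by Equation \eqref{eqn:prob-step}. Thus $\mu_i^t$ is an approximate bounded martingale whose quadratic variation grows by roughly $(1-(\mu_i^t)^2)/N$ per generation, biased by a drift of order $\sigma_i \epsilon$.

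Second, I would introduce the potential $\Psi(\mu) = \sum_i (1-\mu_i^2)$ and show that, generation by generation,
\begin{equation*}
\E\bigl[\Psi(\mu^{t+1}) \mid \mathcal{F}_t\bigr] \leq \Bigl(1 - \tfrac{1}{N}\Bigr)\Psi(\mu^t) + \mathrm{error}(\epsilon,n),
\end{equation*}
where the error term comes from the selection shift and is controlled using the hypothesis $\epsilon < 1/n$ together with an estimate analogous to Lemma \ref{lem:density}. Iterating this recursion, $\E[\Psi(\mu^t)]$ decays toward a small residual value that shrinks as $T$ grows. Together with Markov's inequality this shows that with high probability $\mu_i^t$ spends most of its time near $\pm 1$. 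Finally, once $1 - |\mu_i^t| \leq c/N$ for some constant $c$, the probability that all $N$ samples in one sampling step agree is $\Omega(1)$, so the number of generations needed before an absorption event for coordinate $i$ is stochastically dominated by a geometric random variable with constant success probability. A Markov/Chernoff bound on this hitting time, combined with a union bound over the $n$ coordinates, yields the claimed probability $< 2/n$.

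The main obstacle will be controlling the selection drift near the boundary, since the drift term in Equation \eqref{eqn:prob-step} can in principle push $\mu_i^t$ away from the nearest vertex and thus increase $\Psi$. Bounding this error tightly---showing that, despite drift, the sampling noise wins on average---requires the careful Fourier-analytic control developed in Section \ref{sec:variance-noise}, in particular the fact that the per-step increase of $\Psi$ from selection is comparable to $\sum_i \hat{f}(i;\nu)^2$, which is itself absorbed by $(1-n\epsilon)$ from the variance bound of Lemma \ref{lem:noise-fitness}. The large polynomial prefactor $C \epsilon n^8 N^4/(1-n\epsilon)$ in $T$ reflects the looseness of applying Markov to the potential and absorption times, and accommodates the accumulated error terms.
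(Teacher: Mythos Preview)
Your absorption observation and the geometric endgame once $1-|\mu_i|\le c/N$ are correct and match the paper. The gap is the middle step, the potential contraction. The sampling step does give exactly $\E[\sigma_i^2(\nu)\mid\mu]=(1-1/N)\,\sigma_i^2(\mu)$. But for the selection step, writing $\mu_i'=\nu_i+\sigma_i(\nu)\hat f(i;\nu)/\E_\nu[f]$ and expanding $(1-(\mu_i')^2)-(1-\nu_i^2)$, the dominant term is $-2\nu_i\,\sigma_i(\nu)\hat f(i;\nu)/\E_\nu[f]$, which is \emph{linear} in $\hat f(i;\nu)$, not quadratic. Since $|\hat f(i;\nu)|\le\tfrac{\epsilon}{2}\sigma_i(\nu)$, this term is bounded only by $\epsilon\,\sigma_i^2(\nu)$; it is \emph{not} bounded by $\sum_i\hat f(i;\nu)^2$ as you claim, and neither Lemma~\ref{lem:density} nor Lemma~\ref{lem:noise-fitness} controls it. The best recursion your potential supports is $\E[\Psi(\mu^{t+1})\mid\mathcal{F}_t]\le(1+\epsilon)(1-1/N)\,\Psi(\mu^t)$, which contracts only when $\epsilon<1/(N-1)$. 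In the regime of Theorem~\ref{thm:main-full} one has $\epsilon>N^{-1/3}\gg 1/N$, so there is no contraction and $\Psi$ need not shrink at all; the argument stalls before you ever get near a vertex.

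The paper sidesteps per-step contraction with a \emph{global budget} argument. By Lemma~\ref{lem:density}, each selection step satisfies $(\mu_j^{t+1}-\nu_j^t)^2\le\hat f(j;\nu^t)^2\le(1-n\epsilon)^{-1}\bigl(\tilde f(\mu^{t+1})-\tilde f(\nu^t)\bigr)$, so (after Cauchy--Schwarz over an interval of length $T_1$) the squared cumulative selection motion in coordinate $j$ is bounded by $T_1/(1-n\epsilon)$ times the fitness increase on that interval---and the total fitness increase over all time is at most $\epsilon$. On the other hand, if coordinate $j$ is never within $\alpha=(n^2N)^{-1}$ of $\pm1$, the sampling increments accumulate variance at least $\alpha T_1/(2N)$ per interval. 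Since $\mu_j$ is trapped in $[-1,1]$, the sampling and selection sums over any interval must nearly cancel, forcing the selection sum to be as large as the sampling sum---which, over $T_2$ intervals, exhausts the $\epsilon$-budget. Selection may beat drift step by step, but it cannot do so indefinitely; that is the missing idea.
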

\begin{proof}
Note that if $|\nu_j^{t'}| = 1$ for some time $t'$, then $\nu_j^{t} = \nu_j^{t'}$ for every $t\geq t'$.  Observe also that if $|\mu_j^{t'}| > 1-(n^2N)^{-1}$ (in this case we say $j$ is $\alpha$-determined with $\alpha= n^{-2}N^{-1}$), we have by Markov's inequality: 
$$\Pr[ |\nu_j^{t'}| < 1 ] \leq 1/n^2.$$
We will show that after enough time, it is unlikely that there is any coordinate that was never $\alpha$-determined.  More precisely, 
let $A_{j;t}$ be the event that coordinate $j$ was not $\alpha$-determined for $\mu^{1},\dots,\mu^{t}.$
To prove the lemma, the above reasoning tells us that it suffices to show that for $T$ as set in the condition of the lemma: 
$$ \Pr[ \bigvee_{j=1}^n A_{j;T}] \leq 1/n. $$
We will consider each coordinate separately and show that for each $j$, 
$\Pr[ A_{j;T} ] \leq n^{-2}.$
Our proof will use the following simple claims relating $\Pr[A_{j;T}]$ to the selection steps of the process.   
\begin{claim}\label{claim:determined-variance}
Fix any time $t_0$ and an interval $T_1,$ such that $t_0+T_1\leq T$.  Then: 
$$ \E\left[ \left(\sum_{t=t_0}^{t_0+T_1}  \nu_j^{t}- \mu_j^{t}\right)^2 \right] \geq \frac{\alpha\Pr\left[A_{j;T}\right]T_{1}}{2N}.
$$
\end{claim}
\begin{proof}
\begin{align*}
 \E\left[ \left(\sum_{t=t_0}^{t_0+T_1} \nu_j^{t}- \mu_j^{t}\right)^2 \right]  & = 
 \sum_{t=t_0}^{t_0+T_1} \E_{\nu^{t} \sim B(\mu^{t})} \left[ \left(\nu_j^{t} -\mu_j^{t}\right)^2\right] \\
  & \geq  \sum_{t=t_0}^{t_0+T_{1}}\Pr\left[A_{j;t}\right]\E_{\nu^{t}}\left[\left(\nu_{j}^{t}-\mu_{j}^{t}\right)^{2}\mid A_{j;t}\right]\nonumber \\
 & \geq  \frac{\alpha\Pr\left[A_{j;T}\right]T_{1}}{2N}.
\end{align*}
Note that for $t < t'$, the outcome of $(\nu_j^{(t')} - \mu_j^{(t')})$ has expctation $0$, even given any information about time $t$; this gives the first equality.  
The last inequality holds because $\Pr[A_{j;t}] \leq \Pr[ A_{j;t'}]$ for $t\geq t'$ and because
$$ \E_{\nu^{t}\sim B\left(\mu^{t}\right)}\left[\left(\nu_{j}^{t}-\mu_{j}^{t}\right)^{2}\mid A_{j;t}\right] = \frac{\sigma_j(\mu^{t})^2 }{N} \geq \frac{\alpha}{2N}. \qedhere
$$
\end{proof}

The next claim tells us that for any interval of time $t_0\dots(t_0+T_1),$ the change in $\mu_j$ due to the sampling steps cannot be much more than the change from the selection steps.  
\begin{claim}\label{claim:noise-select}
$$ \frac{1}{2}\left(\sum_{t=t_0}^{t_0+T_1} \nu_j^{t} - \mu_j^{t}\right)^2 \leq 
\left(\sum_{t=t_0}^{t_0+T_1} \mu_j^{t+1}-\nu_j^{t}\right)^2+4.$$
\end{claim}
\begin{proof}
Observe that
 $$\mu_j^{t_0+T_1+1} - \mu_j^{t_0} = \left(\sum_{t=t_0}^{t_0+T_1} \nu_j^{t} - \mu_j^{t}\right) +\left(\sum_{t=t_0}^{t_0+T_1}\mu_j^{t+1}-\nu_j^{t}\right)$$
has magnitude at most 2, which gives:  
$$ \left| \sum_{t=t_0}^{t_0+T_1} \nu_j^{t} - \mu_j^{t} \right| \leq \left|\sum_{t=t_0}^{t_0+T_1}\mu_j^{t+1}-\nu_j^{t} \right|+2.$$ 
Squaring both sides, the claim follows from the fact that 
$2|x|^2+8 \geq (|x|+2)^2 . \qedhere$
\end{proof}

We now complete the proof of the lemma.  First, combining Claims \ref{claim:determined-variance} and \ref{claim:noise-select} tells us that for $t_0+T_1\leq T$: 
$$ \E\left[\left(\sum_{t=t_0}^{t_0+T_1} \mu_j^{t+1}-\nu_j^{t}\right)^2\right] \geq 
\frac{\alpha\Pr\left[A_{j;T}\right]T_{1}}{4N} - 4. 
$$
By applying Cauchy-Schwarz and Lemma \ref{lem:density}, we can relate the quantity inside the expectation to the change in fitness:
\begin{align*}
 \left(\sum_{t=t_0}^{t_0+T_1} \mu_j^{t+1}-\nu_j^{t}\right)^2  
& \leq T_1 \sum_{t=t_0}^{t_0+T_1} \left(\mu_j^{t+1}-\nu_j^{t}\right)^2 \\
& = T_1 \sum_{t=t_0}^{t_0+T_1} \left(\frac{\sigma_i(\nu^t)\hat{f}(j;\nu^t)}{\tilde{f}(\nu^t)}\right)^2\\
& \leq \frac{T_1}{1-n\epsilon}\sum_{t=t_0}^{t_0+T_1} \tilde{f}(\mu^{t+1})-\tilde{f}(\nu^t)
\end{align*}
Taking expectations on both sides, we conclude that for a sufficiently long interval, the expected change in fitness on that interval is a good upper bound on $\frac{\alpha \Pr[ A_{j;T}]}{4N}$: 
\begin{equation}\label{eqn:interval}
\frac{1}{1-n\epsilon}\E\left[ \sum_{t=t_0}^{t_0+T_1} \tilde{f}(\mu^{t+1})-\tilde{f}(\nu^t) \right] \geq
\frac{\alpha\Pr[A_{j;T}]}{4N}-4/T_1.
\end{equation}
We can now amplify this bound $T_2$ times while using the fact that the total fitness change is at most $\epsilon$ ($T_1$ and $T_2$ will be set after). 
\begin{align*}
T_2 \left(\frac{\alpha \Pr[A_{j;T}]}{4N} - 4/T_1\right) & \leq 
\sum_{\ell=0}^{T_2-1} \E\left[ \sum_{t=\ell*T_1}^{(\ell+1)*T_1} \tilde{f}(\mu^{t+1})-\tilde{f}(\nu^t)  \right] \frac{1}{1-n\epsilon} \\ 
& =  \E\left[ \sum_{t=0}^{T_1*T_2} \tilde{f}(\mu^{t+1})-\tilde{f}(\nu^t)\right] \frac{1}{1-n\epsilon}\\
& = \E\left[ \sum_{t=0}^{T_1*T_2} \tilde{f}(\mu^{t+1})-\tilde{f}(\nu^t) + 
\tilde{f}(\nu^{t})-\tilde{f}(\mu^{t})\right]\frac{1}{1-n\epsilon}\\
& = \E[ \tilde{f}(\mu^{T+1}) - \tilde{f}(\mu^0)]\frac{1}{1-n\epsilon}\\
& \leq \frac{\epsilon}{1-n\epsilon}
\end{align*}
Finally, we have 
$$ \Pr[A_{j;T}]  \leq \frac{4N\epsilon}{T_2(1-n\epsilon)\alpha} + \frac{8N}{\alpha T_1}.$$ 
Taking $T_1= 16N^2 n^4$ and $T_2 = \frac{\epsilon 8N^2 n^4}{1-n\epsilon}$ bounds the probability by $1/n^2.$
\end{proof}

Lemma \ref{lem:fast-vertex} tells us that with probability at least $1-2/n$, the population vector will be at a vertex after at most $T=O\left(\frac{\epsilon n^8 N^4}{1-n\epsilon}\right)$ steps, in which case $\tilde{f}(\mu^{T})\in \{1,1+\epsilon\}$.  On the other hand,       
Lemma \ref{lem:main} tells us that for any $T$, the probability that the total negative effect of the sampling exceeds $\alpha$ is at most $\beta$; since the fitness change for each selection step is non-negative (for our choice of $\epsilon$), we have that 
$$ \Pr[\tilde{f}(\mu^{T})\neq 1+\epsilon] = \Pr[ \tilde{f}(\mu^{T}) < \tilde{f}(\mu^{(0)}) - \alpha] \leq \beta$$
when $\tilde{f}(\mu^0) > 1+\alpha$. 

\end{document}